\newtheorem{fact}{Fact}
\newtheorem{definition}[fact]{Definition}
\newtheorem{theorem}[fact]{Theorem}
\newtheorem*{theorem*}{Theorem}
\newtheorem{lemma}[fact]{Lemma}
\newcommand{\eps}{\varepsilon}
\newcommand{\patbox}[1]{\vspace{2mm}\noindent\fbox{\parbox{0.47\textwidth}{\hspace{1mm}\parbox{0.45\textwidth}{\hspace{1mm}\\#1\vspace{1.5mm}}}}\\}
\begin{document}

\title{Quantum Algorithms for Estimating Physical Quantities using Block-Encodings}

\author{Patrick Rall}
\affiliation{Quantum Information Center, University of Texas at Austin}

\date{\today}

\begin{abstract}
We present quantum algorithms for the estimation of $n$-time correlation functions, the local and non-local density of states, and dynamical linear response functions. These algorithms are all based on block-encodings - a versatile technique for the manipulation of arbitrary non-unitary matrices on a quantum computer. We describe how to `sketch' these quantities via the kernel polynomial method which is a standard strategy in numerical condensed matter physics. These algorithms use amplitude estimation to obtain a quadratic speedup in the accuracy over previous results, can capture any observables and Hamiltonians presented as linear combinations of Pauli matrices, and are modular enough to leverage future advances in Hamiltonian simulation and state preparation.\end{abstract}

\maketitle

\section{Introduction}

A central goal of quantum algorithms is to aid in the study of large quantum systems. It is well established, for example, that quantum computers can simulate the dynamics of most Hamiltonians of interest \cite{1906.07115}.  Hamiltonian simulation algorithms, sometimes combined with the quantum Fourier transform, have led to quantum algorithms for some physical quantities, including correlation functions \cite{1401.2430} and dynamical linear response functions \cite{1804.01505}. Both of these examples are crucial for the understanding of phenomena in condensed matter physics like electron and neutron scattering \cite{west, sears}, conductivity and magnetization \cite{diventra}.

Recent work in Hamiltonian simulation has yielded algorithms with exponential improvements in accuracy \cite{1511.02306} over Trotterization and guarantee linear scaling with the simulation time \cite{1906.07115}. The strategies employed by these works can be neatly encompassed in terms of `block-encodings' - a tool that allows quantum computers to represent non-unitary matrices. These block-encodings can be built using linear combinations of unitaries (LCUs) \cite{1501.01715, 1511.02306} and manipulated using quantum singular value transformation \cite{1806.01838}. In addition to providing new and better algorithms, block-encodings provide an intuitive and powerful framework for performing linear algebra on a quantum computer.

In this work we use block-encodings along with amplitude amplification \cite{0005055, 1908.10846, 1912.05559} to construct quantum algorithms for some physical quantities: $n$-time correlation functions, the local and non-local density of states, and dynamical linear response functions. These algorithms are more versatile than previous works \cite{1401.2430, 1804.01505} in that they can compute more general versions of the functions with greater accuracy. 

The local and non-local density of states and linear response functions are all functions of the energy $f(E)$. We are usually interested in obtaining the general shape of $f(E)$ over a range of energies, i.e. in obtaining a `sketch' of $f(E)$. We show how to perform two sketching strategies from modern classical numerical condensed matter physics \cite{0504627, 1101.5895, 1811.07387}. First, we show how to compute integrals of $f(E)$ over a range of energies: $\int_{E_A}^{E_B} f(E) dE$. Second, we show how to compute the moments of a Chebyshev expansion of $f(E)$: briefly assuming $|E| \leq 1$ for ease of explanation, if $T_n(E)$ is the $n$'th Chebyshev polynomial of the first kind, then we show how to compute constants $c_n$ such that
\begin{align}
    f(E) \approx \frac{1}{\pi\sqrt{1-E^2}} \cdot \sum_{n=0}^{N} c_n T_n(E).
\end{align}
This procedure is known as the kernel polynomial method \cite{0504627} and is intuitively similar to sketching a function by computing the first few coefficients in its Fourier series.  Very recent work \cite{2004.04889} shows how similar methods can also perform point-estimates of the density of states by approximating a delta function with a polynomial close to a narrow Gaussian.

Algorithms that compute physical quantities often face barriers from complexity theory, since computing expectations of observables on ground states of Hamiltonians is $\mathsf{QMA}$-complete \cite{0406180}. This remains true even when severe restrictions are placed on Hamiltonians \cite{1212.6312}. For this reason we employ strategies that sidestep these barriers. For correlation functions, we do not provide algorithms for preparing ground states or other states of interest, since the best algorithms for their preparation must use properties of the particular Hamiltonian in question.  Evaluating the density of states at particular energies is \#$\mathsf{P}$-complete \cite{1010.3060}, but sketching the density of states via integrals and Chebyshev expansions is in $\mathsf{BQP}$.

The structure of our paper is as follows. In section~II. we review block-encoding techniques. In section~III. we employ these techniques to study $n$-time correlation functions. If we have a set of observables ${O_i}$ and times $\{t_i\}$ we compute expectations of the form
\begin{align}\left\langle O_1(t_1) O_2(t_2) ... \right\rangle\end{align}
employing the Heisenberg picture. In section~IV. we outline quantum singular value transformation and tools for computing Chebyshev moments and integrals over energy intervals. In section~V. we employ these techniques to compute the density of states and the local density of states. If $H$ has eigenvalues $\{E_i\}$ and dimension $D$ then the density of states is:
\begin{align} \rho(E) = \frac{1}{D} \sum_i \delta(E_i - E).\label{eqn:dos}\end{align}

Furthermore, say $H$ is a Hamiltonian describing a particle with some set of positions $\{ \vec r \}$ and position eigenstates $\{\ket{\vec r}\}$. If the eigenvectors of $H$ are $\{\ket{\psi_i}\}$, then the local density of states is:
\begin{align} \rho_{\vec r}(E) = \sum_i \delta(E_i - E) |\braket{\psi_i | \vec r}|^2  \label{eqn:ldos}\end{align}

Finally in section~VI. we show how to sketch linear response functions of the form
\begin{align} A(E) = \left\langle  B \delta(E - H + E_0)C  \right\rangle \label{eqn:lres} \end{align}
where $E_0$ is the ground state energy of $H$ and $B,C$ are some observables. In the appendix we show how to construct optimal polynomial approximations to the window function, which we require to compute integrals of $\rho(E), \rho_{\vec r}(E)$ and $A(E)$.

\section{Block-Encoding Techniques}

Block encodings allow quantum computers to perform manipulations with non-unitary matrices. If $A$ is \emph{any} matrix with $|A| \leq 1$ where $|A|$ is the largest singular value, then a block-encoding is a unitary $U_A$ such that $A$ occupies the top left corner of $U_A$:
\begin{align}U_A = \begin{bmatrix} A  & \hspace{1mm}\cdot\hspace{1mm} \\ \cdot & \cdot \end{bmatrix}\end{align}
    Below we give a more formal definition involving an explicit Hilbert space $\mathcal{H}$ for $A$ and an ancillary Hilbert space $\mathbb{C}^k$ for postselection\footnote{In the general case when $A$ is a rectangular matrix that maps $\mathcal{H}\to\mathcal{H}'$ then the input ancilla space $\mathbb{C}^k$ and output ancilla space $\mathbb{C}^l$ must be chosen so that $\mathcal{H}\otimes \mathbb{C}^k$  and $\mathcal{H}'\otimes \mathbb{C}^l$ have the same dimension. For this paper we assume that $A$ is square so we can pick $l=k$.}. We also give a notion of accuracy and a notion of scaling to allow for $|A| > 1$. The number of qubits needed to realize these spaces is bounded by the circuit complexity of $U_A$. We denote the computational basis for ancillary Hilbert spaces $\mathbb{C}^k$ by $\{\ket{0}_k, \ket{1}_k, \ldots\}$.

\begin{definition} \label{def:block} Say $A$ is a matrix on $\mathcal{H}$ with $|A| \leq \alpha$. A unitary $U_A$ on $\mathbb{C}^k \otimes \mathcal{H} $ is an $\eps$-accurate $\alpha$-scaled $Q$-block-encoding of $A$ if $U_A$ is implementable using $Q$ elementary gates and for some $k$ we have
\begin{align}|A/\alpha - (\bra{0}_k\otimes I)U_A(\ket{0}_k\otimes I)| \leq \eps. \end{align}
If `$\eps$-accurate' is omitted then 0-accurate (exact) is implied, and if `$\alpha$-scaled' is omitted then 1-scaled is implied.
\end{definition}

In our work we will only be interested in block-encodings of products of observables, so $A$ will be square and often Hermitian. The Pauli matrices are a basis for Hermitian matrices, but since they are also unitary they have trivial ($U_P = P$) block-encodings. A key property of block-encodings is that a quantum computer can easily prepare products and linear combinations of them.

\begin{lemma}  \label{lemma:lcu} Say the matrices $\{A_i\}$ each have $\alpha_i$-scaled  $Q_i$-block-encodings. Then:
\begin{enumerate}
 \item the product $\prod_i A_i$ has a $\left(\prod_i \alpha_i\right)$-scaled  $O\left(\sum_i Q_i \right)$-block-encoding, and
 \item for any $\beta_i \in \mathbb{C}$ the linear combination $\sum_i \beta_i A_i$ has a $\left(\sum_i \alpha_i |\beta_i|\right)$-scaled $O\left(\sum_i Q_i\right)$-block-encoding.
\end{enumerate}
\end{lemma}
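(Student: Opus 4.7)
The plan is to construct the requested block-encodings explicitly on a tensor product of ancilla registers and then verify the projection identity of Definition~\ref{def:block} by direct calculation.

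For the product, I would introduce a separate ancilla register $\mathbb{C}^{k_i}$ for each factor and consider the composition $U := U_{A_1} \cdots U_{A_n}$ on $\bigotimes_i \mathbb{C}^{k_i} \otimes \mathcal{H}$, where each $U_{A_i}$ acts on its own $\mathbb{C}^{k_i}$ and on $\mathcal{H}$, and trivially on the other ancillas. Applying the identity $(\bra{0}_{k_i} \otimes I) U_{A_i} (\ket{0}_{k_i} \otimes I) = A_i/\alpha_i$ iteratively from right to left shows that the ``all ancillas in $\ket{0}$'' component of $U(\ket{0\cdots 0}\otimes\ket{\psi})$ is $\ket{0\cdots 0} \otimes \bigl(\prod_i A_i / \prod_i \alpha_i\bigr)\ket{\psi}$; at each step the discarded branch has at least one ancilla orthogonal to $\ket{0}$ and is therefore unreachable by later $U_{A_j}$'s (which leave that ancilla alone). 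Since the total gate count is $\sum_i Q_i$ up to a constant, this yields the claimed block-encoding.

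For the linear combination, I would use the standard LCU construction. Absorbing the phase $\beta_i/|\beta_i|$ into $U_{A_i}$ we may assume $\beta_i \geq 0$. Introduce an index register of $\lceil \log N \rceil$ qubits together with two unitaries: a PREPARE that sends $\ket{0}$ to $W^{-1/2}\sum_i \sqrt{\alpha_i \beta_i}\,\ket{i}$ for $W = \sum_i \alpha_i \beta_i$, and a SELECT that applies $U_{A_i}$ to the system and its ancilla, controlled on $\ket{i}$. The candidate block-encoding is $(\text{PREPARE}^\dagger \otimes I)\,\text{SELECT}\,(\text{PREPARE} \otimes I)$; expanding $(\bra{0}_{\text{idx}}\bra{0}_{\text{anc}}\otimes I)$ against this unitary reduces, via the defining property of each $U_{A_i}$, to $W^{-1}\sum_i \beta_i A_i$ in the top-left block, matching the claimed $W$-scaling. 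PREPARE uses $O(N)$ gates, dominated by $\sum_i Q_i$ under the standard assumption $Q_i \geq 1$, and SELECT uses $O(\sum_i Q_i)$ gates for controlized copies of the $U_{A_i}$.

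The main obstacle I expect is bookkeeping rather than any conceptual difficulty: keeping track of which ancilla register each $U_{A_i}$ touches so that factors do not accidentally interfere, managing the complex phases so that the final scaling is exactly $\sum_i \alpha_i |\beta_i|$, and confirming that PREPARE, controlization, and inversion overheads all fit into $O(\sum_i Q_i)$. Once those are handled, the verification in each part is essentially a one-line application of Definition~\ref{def:block} to the constructed unitary.
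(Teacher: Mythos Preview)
Your proposal is correct and follows essentially the same approach as the paper: the paper likewise calls the product construction ``rather trivial'' and for the linear combination sketches the same PREPARE--SELECT--PREPARE$^\dagger$ (there written $V_\beta$, $V_P$, $V_\beta^\dagger$) construction, placing the phases $\beta_i/|\beta_i|$ into the selection operator and citing \cite{1806.01838} for the general case with nontrivial ancillas. If anything, your write-up is more explicit than the paper's own proof, which largely defers the details to the literature.
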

\begin{proof}  A complete construction and analysis of these circuits is given in \cite{1806.01838}, although the core techniques were put forth earlier \cite{1501.01715, 1511.02306}. The construction of block-encodings of products is rather trivial, and we give a brief sketch of the proof that a linear combination of Pauli matrices $O = \sum_{i=1}^k\beta_i P_i$ has a $O\left(\sum_i\beta_i\right)$-scaled $O(k)$-block-encoding $U_O$:
\begin{align}
V_\beta \ket{0}_k &:= \frac{1}{\sqrt{\sum_i |\beta_i|}}\sum_{i=1}^k \sqrt{|\beta_i|} \ket{i}_k \\
V_{P} &:= \sum_{i=1}^k \ket{i}_k\bra{i}_k \otimes \frac{\beta_i}{|\beta_i|}  P_i \\
U_{O} &:= (V_\beta^\dagger \otimes I) V_P (V_\beta \otimes I)
\end{align}
The gate complexity is dominated by $V_P$ with complexity $O(k)$. Generalizing to non-trivial block-encodings involves swapping $P_i$ with $U_{A_i}$ and dealing with the control registers.
\end{proof}

Lemma~\ref{lemma:lcu} has the crucial consequence that the vast majority of Hamiltonians in physics have efficient block-encodings, since they can be written as linear combinations of not too many Pauli matrices. In these cases we have $k,\alpha \in O(\text{poly}(n))$ where $n$ is the number of qubits required to encode $\mathcal{H}$.

The algorithms in this work construct block encodings of a desired $A$ and estimate $\text{Tr}(A \rho)$ for some given $\rho$. To do so we assume that there is a unitary that prepares a purification of $\rho$, which is any pure state such that $\rho$ can be obtained by tracing out some ancillary space $\mathbb{C}^l$.

\begin{definition} \label{def:prep} Let $\rho$ be a density operator on $\mathcal{H}$ and let $\ket{\textbf{0}}$ be some easy-to-prepare state in $\mathcal{H}$. A unitary $U_\rho$ on $\mathcal{H}\otimes \mathbb{C}^l$ for some $l$ is an $R$-preparation-unitary of $\rho$ if we have
\begin{align} \rho = \text{Tr}_{\mathbb{C}^l}  \left( \ket{\rho}\bra{\rho} \right),  \end{align}
where $\ket{\rho} = U_\rho(\ket{\mathbf{0}}\ket{0}_l)$ and $U_\rho$ is implementable using $R$ elementary gates.
\end{definition}

Often we are interested in correlation functions and linear response with respect to ground states or thermal states of some Hamiltonian. Depending on the situation performing state preparation can be an extremely difficult computational task, and the identification of specific practical situations where state preparation is easy is an area of active research \cite{1609.07877}. We consider the problem of state preparation itself out of scope for this work, but aim to present our algorithms in an abstract manner to maximize their versatility and permit the leveraging of future results. We do point out the existence of the following generic tool for constructing thermal states.

\begin{lemma} \label{lemma:thermal} Let $H$ be a Hamiltonian on a $D$-dimensional Hilbert space with an $\alpha$-scaled $Q$-block-encoding. Then for any $\beta \geq 0$ there exists an $R$-preparation unitary for a state $\eps$-close in trace distance to the thermal state $e^{-\beta H} / Z$ where $Z = \text{Tr}(e^{-\beta H})$ and:
\begin{align} R \in O\left( Q\alpha \cdot \sqrt{\frac{D\beta}{Z}}  \log\left(\sqrt{\frac{D}{Z}}\frac{1}{\eps}\right)\right) \end{align}
\end{lemma}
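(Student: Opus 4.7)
The plan is to use the standard purification-plus-amplification construction: prepare a maximally entangled state, apply a block-encoding of $e^{-\beta H/2}$ to one half of it, and amplify the resulting amplitude on the ``good'' subspace. First I would prepare $\ket{\Phi} = D^{-1/2}\sum_i \ket{i}\ket{i}$ on $\mathcal{H}\otimes \mathcal{H}'$ using $O(\log D)$ Hadamards and CNOTs, and note that $(e^{-\beta H/2}\otimes I)\ket{\Phi}$ is, after normalization, a purification of $e^{-\beta H}/Z$, with norm-squared equal to $\Tr(e^{-\beta H})/D = Z/D$.

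Next I would build a block-encoding of $e^{-\beta H/2}$ by applying quantum singular value transformation (as in \cite{1806.01838}) to the given $\alpha$-scaled block-encoding of $H$. The key input is a bounded polynomial approximation $P(y)$ to a rescaled version of $e^{-\beta\alpha y/2}$ on $[-1,1]$; a standard Chebyshev / Jacobi--Anger expansion of the exponential supplies such a $P$ of degree $d = O(\sqrt{\beta\alpha}\,\mathrm{polylog}(1/\eps'))$ with pointwise error $\eps'$. QSVT then yields an $\eps'$-accurate $\gamma$-scaled block-encoding of $e^{-\beta H/2}$ using $O(Q\cdot d)$ gates, where $\gamma$ is the sup-norm of $P$ on $[-1,1]$.

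Applying this block-encoding to the first register of $\ket{\Phi}$ produces amplitude $\sqrt{Z/D}/\gamma$ on the ``ancillas $= \ket{0}$'' subspace, conditional on which the state is exactly the desired purification. Fixed-point amplitude amplification \cite{1908.10846, 1912.05559} then boosts this amplitude to $\Omega(1)$ in $O(\gamma\sqrt{D/Z})$ rounds. Setting $\eps' = O(\eps/(\gamma\sqrt{D/Z}))$ so that the amplified block-encoding error remains below $\eps$ in trace distance introduces the $\log(\sqrt{D/Z}/\eps)$ factor that appears in the claimed bound.

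The main obstacle is controlling the scaling $\gamma$ so that the product $Q \cdot d \cdot \gamma\sqrt{D/Z}$ actually matches $O(Q\alpha\sqrt{\beta D/Z}\log(\sqrt{D/Z}/\eps))$. A naive polynomial approximation of $e^{-\beta\alpha y/2}$ on $[-1,1]$ forces $\gamma \sim \|e^{-\beta H/2}\|$, which can in principle be as large as $e^{\beta\alpha/2}$ and would ruin the bound. Avoiding such a blow-up requires a careful construction---for instance, shifting $H$ so that the relevant spectrum lies in $[0,2\alpha]$, choosing a polynomial whose sup-norm is bounded by $1$ there, and then tracking how the resulting scalar normalization combines with the amplitude $\sqrt{Z/D}$ through amplification so that only a polynomial (rather than exponential) factor in $\beta\alpha$ survives. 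This joint bookkeeping between the polynomial scaling and the amplification amplitude is the delicate step that ultimately produces the advertised $\alpha\sqrt{\beta D/Z}$ dependence.
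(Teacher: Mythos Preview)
Your high-level scheme---apply a block-encoding of $e^{-\beta H/2}$ to one half of a maximally entangled state and then amplify---is exactly what the paper sketches. The difference lies in how the block-encoding of $e^{-\beta H/2}$ is obtained. The paper, following \cite{1603.02940}, builds it from Hamiltonian \emph{simulation} $e^{iHt}$ via the Hubbard--Stratonovich transformation (an LCU of unitary time evolutions), and then applies it to the purified maximally mixed state using robust oblivious amplitude amplification rather than fixed-point amplification. Because every building block $e^{iHt}$ is a genuine unitary with a $1$-scaled block-encoding, that route never encounters the normalization blow-up you flag; the $Q\alpha$ factor in the stated bound is simply the cost of simulating $e^{iHt}$ for the times appearing in the transform (Lemma~\ref{lemma:hamsim}). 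Your alternative---approximating $e^{-\beta\alpha x/2}$ directly by a bounded polynomial and invoking QSVT---is also sound once you shift the spectrum so that $\|e^{-\beta H/2}\|\le 1$; with the standard degree bound $d=O(\sqrt{\beta\alpha}\cdot\text{polylog}(1/\eps'))$ it actually yields cost $O\big(Q\sqrt{\beta\alpha}\cdot\sqrt{D/Z}\cdot\text{polylog}\big)$, so the ``delicate bookkeeping'' you anticipate need not reproduce the linear-in-$\alpha$ dependence of the lemma: that dependence is an artifact of routing through time evolution, not a lower bound your construction must meet.
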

\begin{proof} This is the main result of \cite{1603.02940}, combined with the newer Hamiltonian simulation results of \cite{1610.06546, 1606.02685} with corrections from \cite{1806.01838}. Briefly, the strategy is to construct a block-encoding of $e^{-\beta H /2}$ from $e^{iHt}$ using the Hubbard-Stratonovich transformation, and multiply it onto a purification of the maximally mixed state using a strategy called robust oblivious amplitude amplification.
\end{proof}

We now show how to use amplitude estimation to estimate the expectation of block encoded observables.
\begin{lemma} \label{lemma:observ} If $A$ is Hermitian and has an $\alpha$-scaled $Q$-block-encoding and $\rho$ has an $R$-preparation-unitary, then for every $\eps,\delta > 0$ there exists an algorithm that produces an estimate $\xi$ of $\text{Tr}(\rho A)$ such that
\begin{align} |\xi - \text{Tr}(\rho A) | \leq \eps \end{align}
    with probability at least $(1-\delta)$. The algorithm has circuit complexity $O\left((R+Q) \cdot \frac{\alpha}{\eps}\log \frac{1}{\delta}\right)$.
\end{lemma}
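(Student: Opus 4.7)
The plan is to reduce the estimation of $\Tr(\rho A)$ to quantum amplitude estimation on a short Hadamard-test circuit. First, let $\ket{\rho} = U_\rho(\ket{\mathbf{0}}\ket{0}_l)$ be the purification of $\rho$ and set $\ket{\psi} := \ket{0}_k \otimes \ket{\rho}$, viewed as a state on $\mathbb{C}^k \otimes \mathcal{H} \otimes \mathbb{C}^l$. The defining identity of the block-encoding, $(\bra{0}_k \otimes I)\,U_A\,(\ket{0}_k \otimes I) = A/\alpha$, together with $\Tr_{\mathbb{C}^l}(\ket{\rho}\bra{\rho}) = \rho$, immediately yields $\bra{\psi}(U_A \otimes I_l)\ket{\psi} = \Tr(A\rho)/\alpha$. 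Because $A$ is Hermitian with $|A| \leq \alpha$, the operator $(A/\alpha)\otimes I_l$ is Hermitian with norm at most $1$, so this overlap is a real number in $[-1,1]$.

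The main obstacle is that this overlap is signed while amplitude estimation natively returns a non-negative quantity. I would handle this with the standard Hadamard-test wrapper: introduce a single control qubit in $\ket{+}$, apply controlled-$(U_A \otimes I_l)$, then Hadamard and measure the control in the computational basis. A short calculation gives that the probability of outcome $0$ is $P_0 = \tfrac{1}{2}\bigl(1 + \Tr(A\rho)/\alpha\bigr) \in [0,1]$, linear in the quantity of interest. One preparation-and-measurement round costs $O(R+Q)$ gates: one call to $U_\rho$ to prepare $\ket{\psi}$ and one controlled call to $U_A$, with only constant overhead for the control qubit and the single-qubit gates.

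Next I would invoke amplitude estimation \cite{0005055, 1908.10846, 1912.05559} to estimate $P_0$ to additive precision $\eps/(2\alpha)$ with some constant success probability. The standard guarantee costs $O(\alpha/\eps)$ applications of the Hadamard-test circuit and its reflections, giving a total per-attempt cost of $O((R+Q)\cdot \alpha/\eps)$; pushing the precision back through the affine map $P_0 \mapsto \alpha(2P_0-1)$ yields an estimate of $\Tr(\rho A)$ with additive error $\eps$.

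Finally, I would boost the success probability from constant to $1-\delta$ using the usual powering lemma: run the above procedure $O(\log(1/\delta))$ independent times and return the median of the outputs. This multiplies the gate count by $O(\log(1/\delta))$, producing the claimed complexity $O\bigl((R+Q)\cdot(\alpha/\eps)\log(1/\delta)\bigr)$. Beyond the sign issue, everything is routine bookkeeping plus citing the amplitude-estimation and powering-lemma guarantees; a minor technicality is that $U_A$ must be controlled, but controlling a $Q$-gate unitary incurs only constant multiplicative overhead, which is absorbed into $Q$.
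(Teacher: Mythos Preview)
Your proposal is correct and is essentially the paper's approach in a different wrapper. Both arguments handle the sign issue by the same affine shift $x\mapsto(1+x)/2$ and then run amplitude estimation; the only difference is that the paper implements the shift by invoking Lemma~\ref{lemma:lcu} to build a $1$-scaled block-encoding of $\bar A=(I+A/\alpha)/2$ and estimates the amplitude $|\Pi\ket{\Psi}|=\Tr(\rho\bar A)$ directly, whereas you implement it via a Hadamard test on controlled-$U_A$ and estimate the probability $P_0=\tfrac12(1+\Tr(\rho A)/\alpha)$. At the circuit level these are the same object (the LCU circuit for $(I+U_A)/2$ \emph{is} the Hadamard-test circuit), and both give the Grover operator cost $O(R+Q)$ and query count $O((\alpha/\eps)\log(1/\delta))$. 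The paper also folds the $\log(1/\delta)$ factor into the amplitude-estimation primitive itself rather than doing a separate median boost, but that is cosmetic.
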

\begin{proof} The algorithm is as follows:

\patbox{
\textbf{Algorithm: Observable Estimation} \\[2mm]
Let $\bar A = (I +  A/\alpha)/2$, and let $U_{\bar A}$ be its 1-scaled $O(Q)$-block-encoding which exists by Lemma~\ref{lemma:lcu}. Let $U_{\bar A}$ have control register dimension $k$ as in Definition~\ref{def:block}, and let $l$ and $\ket{\mathbf{0}}$ be as in Definition~\ref{def:prep}. Let:
\begin{align}
\ket{\rho} &:= U_\rho \ket{\mathbf{0}} \ket{0}_l\\
\ket{\Psi} &:= (U_{\bar A} \otimes I)  \ket{0}_k \ket{\rho}\\
    \Pi &:= \ket{0}_k\bra{0}_k \otimes \ket{\rho}\bra{\rho} 
\end{align}

Perform amplitude estimation to obtain an estimate $\xi_0$ of $|\Pi\ket{\Psi}|$ to precision $\eps/(2\alpha)$ with probability at least $(1-\delta)$. Return $\xi := (2\xi_0 + 1)\alpha$.
}

For details on how to perform amplitude estimation we refer to recent results \cite{1908.10846, 1912.05559} that avoid using the quantum Fourier transform, which was required by the traditional method \cite{0005055} from 2002. These results establish that $|\Pi\ket{\Psi}|$ can be estimated to additive error $\eps$ and probability at least $(1-\delta)$ using $O\left(\frac{1}{\eps} \log \frac{1}{\delta}\right)$ applications of a Grover operator:
\begin{align}G := -( I - 2\Pi )(I - 2\ket{\Psi}\bra{\Psi})\end{align}
This operator requires four uses of $U_\rho$ and two uses of $U_{\bar A}$, so it has circuit complexity $O(R+Q)$. This completes the runtime analysis.

Amplitude estimation estimates:
\begin{align}
|\Pi\ket{\Psi}| &= | \bra{0}_k\bra{\rho} (U_{\bar A} \otimes I)\ket{0}_k\ket{\rho} |\\ 
&=  |\bra{\rho} (\bar A \otimes I) \ket{\rho} | \\
&= | \text{Tr}( \ket{\rho}\bra{\rho} (\bar A \otimes I))| \\
&= | \text{Tr}\left(\text{Tr}_{\mathbb{C}^l}(\ket{\rho}\bra{\rho})\bar A\right) | = |\text{Tr}(\rho \bar A)|
\end{align}
Since $\bar A$ has $|\bar A| \leq 1$ its eigenvalues lie in the range $[-1,1]$, so $\bar A$ is positive semi-definite. Therefore $\xi_0$ approximates $|\text{Tr}(\rho \bar A)| = \text{Tr}(\rho \bar A) = (1 + \text{Tr}(\rho  A)/\alpha)/2$ to error $\eps/(2\alpha)$, so $\xi$ approximates $\text{Tr}(\rho A)$ to error $\eps$ as desired.
\end{proof}

In addition to providing a simple framework for manipulating observables on a quantum computer, block-encodings are often the starting point for modern Hamiltonian simulation algorithms \cite{1501.01715, 1906.07115}. Once a block-encoding of a Hamiltonian $H$ is constructed, we can apply functions to its eigenvalues using quantum singular value transformation discussed in section IV.

\section{Correlation Functions}

In this section we show how to estimate $n$-time correlation functions, improving on an algorithm presented in \cite{1401.2430}. This algorithm does not require any new technical tools. We include it primarily to illustrate how simple it is to construct algorithms for complex quantities via block-encodings. We also show how to estimate non-Hermitian block-encoded observables, a tool we will require later in section VI. Consider a system evolving under a time-independent Hamiltonian $H$. If $O_i$ is some Hermitian operator then in the Heisenberg picture:
\begin{align} O_i(t_i) := e^{iHt_i} O_i e^{-iHt_i}\end{align}

To prepare block-encodings of observables in the Heisenberg picture we leverage a modern result in Hamiltonian simulation for time-independent Hamiltonians. For simplicity we focus on time-independent Hamiltonians but there also exist block-encodings for time evolution under time-dependent Hamiltonians \cite{1906.07115, 1805.00582, 1805.00675}.

\begin{lemma} \label{lemma:hamsim} Let $H$ be a Hamiltonian on a $D$-dimensional Hilbert space with an $\alpha$-scaled $Q$-block-encoding. Then for any $t,\eps > 0$ there exists an $\eps$-accurate $T(t,\eps)$-block-encoding of $e^{i H t}$ where:
\begin{align} T(t,\eps) \in O\left( Q\alpha|t| + \frac{Q\log(1/\eps)}{\log(e + \log(1/\eps) / (\alpha |t| ))}  \right)\label{eqn:Tcomplex}\end{align}
\end{lemma}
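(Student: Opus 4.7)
The plan is to reduce $e^{iHt}$ to a polynomial in $H/\alpha$ of controlled degree $K$ and apply that polynomial via quantum singular value transformation (QSVT) to the supplied block-encoding of $H$. Set $\tau := \alpha|t|$ and $x := H/\alpha$, so $x$ has spectrum in $[-1,1]$. The starting point is the Jacobi--Anger expansion
\begin{align}
e^{ix\tau} = J_0(\tau)\, I + 2\sum_{k=1}^{\infty} i^k J_k(\tau)\, T_k(x),
\end{align}
which represents the target as a Chebyshev series with Bessel-function coefficients $J_k(\tau)$. This form is tailor-made for QSVT because $T_k$ of a Hermitian block-encoded matrix is exactly what QSVT implements natively.

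The main analytic step is to bound the truncation degree $K$ needed to achieve operator-norm accuracy $\eps$. Using the standard tail estimate $|J_k(\tau)| \leq (\tau/2)^k/k!$, the error from discarding all terms of degree greater than $K$ is at most $2\sum_{k>K}(\tau/2)^k/k!$. A Stirling-type estimate on this factorial tail shows it is below $\eps$ provided
\begin{align}
K \in O\!\left(\tau + \frac{\log(1/\eps)}{\log\!\bigl(e + \log(1/\eps)/\tau\bigr)}\right),
\end{align}
which is exactly $T(t,\eps)/Q$ in the statement. I expect this truncation-degree bookkeeping to be the main technical obstacle; the two regimes $\log(1/\eps) \ll \tau$ and $\log(1/\eps) \gg \tau$ have to be balanced carefully, though essentially the same estimate is worked out in \cite{1610.06546, 1806.01838}.

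Finally I would invoke QSVT to implement the truncated polynomial. Since $e^{iHt}$ is non-Hermitian, I split the truncated Jacobi--Anger series into its even-degree part (approximating $\cos(Ht)$) and its odd-degree part (approximating $\sin(Ht)$). Both are real polynomials of degree at most $K$ with sup-norm bounded by $1$ on $[-1,1]$, so QSVT yields a $1$-scaled block-encoding of each at cost $O(KQ)$ elementary gates. Combining the two via the LCU construction of Lemma~\ref{lemma:lcu} produces a block-encoding of $\cos(Ht) + i\sin(Ht) = e^{iHt}$, and substituting the bound on $K$ into $O(KQ)$ yields the claimed complexity $T(t,\eps)$.
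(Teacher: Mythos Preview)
The paper does not actually prove this lemma: its proof is a one-line citation to \cite{1606.02685,1610.06546} and to Corollary~60 of \cite{1806.01838}. Your sketch is a faithful outline of exactly the argument contained in those references (Jacobi--Anger truncation of $e^{i\tau x}$ with the Bessel tail bound, followed by QSVT/QSP implementation of the resulting degree-$K$ polynomial), so in substance you and the paper agree.

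One small gap worth flagging: your final step combines block-encodings of $\cos(Ht)$ and $\sin(Ht)$ via Lemma~\ref{lemma:lcu}. That LCU produces a $(\,|1|+|i|\,)=2$-scaled block-encoding, whereas the lemma statement (with ``$\alpha$-scaled'' omitted) implicitly demands a $1$-scaled block-encoding of the unitary $e^{iHt}$. The cited constructions avoid this loss either by implementing the full complex exponential directly via QSP/qubitization rather than splitting into cosine and sine, or by a single round of oblivious amplitude amplification (constant overhead, since the target is exactly unitary). Either fix restores the $1$-scaling without changing the asymptotic $T(t,\eps)$.
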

\begin{proof} This result originated in \cite{1606.02685, 1610.06546}, but it is cleanly re-stated with minor corrections as Corollary~60 of \cite{1806.01838}.
\end{proof}

Using this result we can state and analyze the estimation algorithm.

\begin{theorem} Let:
\begin{itemize}
\item $H$ be a Hamiltonian with an $\alpha$-scaled $Q$-block-encoding,
\item $O_1,...,O_n$ be some observables with $\beta_i$-scaled $R_i$-block-encodings, 
\item $t_1,...,t_n$ be some times,
\item and $\rho$ be a state with an $S$-preparation unitary.
\end{itemize}
    Then for every $\eps,\delta > 0$ there exists an algorithm that produces estimate an estimate $\xi\in \mathbb{C}$ of $\text{Tr}\left( \rho \prod_i O_i(t_i) \right)$ to additive precision $\eps$ in the real and imaginary parts with probability at least $(1-\delta)$. It has circuit complexity $O\left((S+W) \cdot \frac{\gamma}{\eps} \log\frac{1}{\delta}\right)$ where $\gamma= \prod_i \beta_i$ and
\begin{align} W \in& O\left( \sum_{j=1}^n R_j + \sum_{j=0}^n T\left(\tau_j, \frac{\eps}{2(n+1)^2} \right)  \right)\label{eqn:Wcomplex}\\
\subset &O\left(  \sum_{j=1}^n R_j + Q \alpha \sum_{j=0}^n |\tau_j| + Q n^2 \log\left(\frac{n}{\eps}\right)\right) \label{eqn:roundedWcomplex}\end{align}
where $T(t,\eps)$ is defined in Lemma~\ref{lemma:hamsim} and $\tau_j = t_{j+1} - t_{j}$, padding the list of times with $t_0 = t_{n+1} = 0$.
\end{theorem}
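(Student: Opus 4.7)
The plan is to build a block-encoding of $A := \prod_{i=1}^n O_i(t_i)$ and then reduce trace estimation against $\rho$ to two invocations of Lemma~\ref{lemma:observ}. Since $A$ is in general non-Hermitian, I will split it into Hermitian real and imaginary parts and estimate each separately, combining the two real scalars into a single complex estimate $\xi$ at the end.

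First I would telescope the Heisenberg-picture product, pushing the time evolutions inward:
\begin{align}
A \;=\; e^{iH\tau_0}\,O_1\,e^{iH\tau_1}\,O_2\,\cdots\,O_n\,e^{iH\tau_n},
\end{align}
with $\tau_j = t_{j+1}-t_j$ and $t_0 = t_{n+1}=0$, exactly as defined in the statement. For each $e^{iH\tau_j}$ I would invoke Lemma~\ref{lemma:hamsim} to obtain a $T(\tau_j,\eps')$-block-encoding with accuracy $\eps' := \eps/(2(n+1)^2)$, and then apply Lemma~\ref{lemma:lcu}(1) to compose these with the given block-encodings of the $O_i$ into a $\gamma$-scaled block-encoding $U_A$ of $A$. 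Counting gates gives exactly the expression $W$ in Eq.~\ref{eqn:Wcomplex}, and plugging the bound of Lemma~\ref{lemma:hamsim} into each $T(\tau_j,\eps')$ yields the simplified form in Eq.~\ref{eqn:roundedWcomplex}.

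Next, observe that $U_A^\dagger$ is a block-encoding of $A^\dagger$ (after swapping the role of the ancilla input and output), so Lemma~\ref{lemma:lcu}(2) produces $\gamma$-scaled block-encodings of the two Hermitian operators
\begin{align}
A_R := \tfrac{1}{2}(A + A^\dagger), \qquad A_I := \tfrac{1}{2i}(A - A^\dagger),
\end{align}
each at cost $O(W)$. Feeding them into Lemma~\ref{lemma:observ} with precision $\eps$ and failure probability $\delta/2$ yields real estimates $\xi_R,\xi_I$ of $\text{Tr}(\rho A_R),\text{Tr}(\rho A_I)$, and by the union bound $\xi := \xi_R + i\xi_I$ is the desired complex estimate with success probability $\geq 1-\delta$. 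The query complexity is dominated by that of Lemma~\ref{lemma:observ}, giving $O((S+W)\cdot \tfrac{\gamma}{\eps}\log\tfrac{1}{\delta})$ as claimed.

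The main obstacle I anticipate is the error bookkeeping in assembling $U_A$: errors in the $n+1$ approximate time-evolution factors must compose controllably despite the multiplicative structure of the product. The key observation is that in the scaled block every factor has operator norm $\leq 1$, so a telescoping argument bounds the error in the scaled block of the composed encoding by $O((n+1)\eps')$. Choosing $\eps' = \eps/(2(n+1)^2)$ makes this contribution strictly smaller than $\eps/\gamma$ times a small constant, so that after multiplying by the scaling $\gamma$ the block-encoding approximation contributes $o(\eps)$ to the trace estimate, leaving the amplitude-estimation precision of Lemma~\ref{lemma:observ} as the dominant source of error. The quadratic $(n+1)^2$ buffer is comfortably absorbed by the logarithm inside $T(\tau_j,\eps')$, preserving the stated asymptotics.
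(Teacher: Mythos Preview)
Your proposal is correct and follows essentially the same route as the paper: telescope the Heisenberg product into alternating $e^{iH\tau_j}$ and $O_j$ factors, build each time-evolution via Lemma~\ref{lemma:hamsim} at accuracy $\eps/(2(n+1)^2)$, multiply everything with Lemma~\ref{lemma:lcu}, split into Hermitian and anti-Hermitian parts using the adjoint block-encoding, and finish with two calls to Lemma~\ref{lemma:observ}. The only noteworthy difference is in the product error bound: you use the elementary telescoping estimate $(n+1)\eps'$ (valid because every factor in the scaled block has operator norm at most $1$), whereas the paper invokes the coarser bound $(n+1)^2\eps'$ from Lemma~54 of \cite{1806.01838}; your bound is tighter and explains why the $(n+1)^2$ in the denominator of $\eps'$ leaves slack, but both arrive at the same asymptotics.
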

\begin{proof}

The algorithm is as follows:

    \vspace{7cm} ~

\patbox{
\textbf{Algorithm: $n$-time correlation functions} \\[2mm]

Making use of $e^{-iHt_j}e^{iHt_{j+1}} = e^{iH(t_{j+1}- t_{j})} = e^{iH\tau_j} $, we rewrite the product of observables as follows:
\begin{align}
\prod_{j=1}^n O_j(t_j) &=  e^{iHt_1}O_1e^{iH(t_2-t_1)} ... O_n e^{-iHt_n}\\
&=  e^{iH\tau_0} \prod_{j=1}^n  O_j e^{iH\tau_{j}}
\end{align}
Invoking Lemma~\ref{lemma:hamsim} we obtain $\frac{\eps}{2(n+1)^2}$-accurate block-encodings of $e^{iH\tau_j}$, and we multiply them together with the block-encodings of $O_i$ using Lemma~\ref{lemma:lcu}. We obtain a $W$-block-encoding $U_\Gamma$ of an operator $\Gamma$ that approximates $\prod_i O_i(t_i)$.

Observe that $U_\Gamma^\dagger$ is a block-encoding of $\Gamma^\dagger$. This allows us to use Lemma~\ref{lemma:lcu} to construct $\gamma$-scaled $W$-block-encodings of the Hermitian and anti-Hermitian parts of $\Gamma$, as below. Then we invoke Lemma~\ref{lemma:observ} with target accuracy $\eps/2$ for each of the below to obtain $\eps$-accurate estimates of the real and imaginary parts of $\text{Tr}\left( \rho \prod_i O_i(t_i) \right)$.
\begin{align}
\Re\left(\xi\right) :=& \text{ estimate of } \text{Tr}\left( \rho \cdot \frac{\Gamma + \Gamma^\dagger}{2} \right) \label{eqn:herm}\\
\Im\left(\xi\right) :=& \text{ estimate of } \text{Tr}\left( \rho \cdot \frac{\Gamma - \Gamma^\dagger}{2i} \right) \label{eqn:antiherm}
\end{align}
}

Since the block-encodings of $e^{iH\delta t_j}$ are 1-scaled, the only contribution to $\gamma$ are the scalings of the $O_i$, so $\gamma= \prod_i \beta_i$. The runtime is dominated by the complexity $W$ of the block-encoding for $\Gamma$, which by Lemma~\ref{lemma:lcu} is clearly given by (\ref{eqn:Wcomplex}). To obtain (\ref{eqn:roundedWcomplex}) we loosely bound $1/\log(e + \log(1/\eps)/(\alpha|t|)) \leq 1$ in (\ref{eqn:Tcomplex}). This looseness overestimates the runtime in situations where $n$ is very large but the $\tau_j$ are very small.

It remains to show that $\Gamma$ is $\eps/2$-close in spectral norm to $\prod_i O_i(t_i)$, given that the block-encodings of $e^{iH\tau_j}$ are $\frac{\eps}{2(n+1)^2}$-accurate. From there the $\eps/2$-closeness of the Hermitian and anti-Hermitian parts, and the $\eps$-accuracy of the final estimates follow. In general, Lemma~54 of \cite{1806.01838} gives an argument that if $|A - U| \leq \eps_0$ and $|B - V| \leq \eps_1$ then 
\begin{align}|AB - UV| \leq \eps_0 + \eps_1 + 2\sqrt{\eps_0\eps_1} .\end{align}
Iterating this bound for a product of $\prod_{i=0}^n U_i$ where $|U_i - A_i| \leq \eps_0$ we obtain by solving a recurrence relation:
\begin{align}\left|\prod_{i=0}^n U_i - \prod_{i=0}^n A_i\right| \leq (n+1)^2 \eps_0   .\end{align}
Plugging in $\eps_0 := \frac{\eps}{2(n+1)^2} $ gives the desired upper bound of $\eps/2$.
\end{proof}

This algorithm improves over \cite{1401.2430} in several ways. First, \cite{1401.2430} restricts to Pauli observables since they are unitary. Here $O_i$ do not have to be unitary. Secondly, since we are using amplitude estimation to obtain $\xi$ we obtain a quadratic speedup in the accuracy dependence. Finally, \cite{1401.2430} restricts to Hamiltonians where exact Hamiltonian simulation can be achieved using circuit identities. Of course, for situations where these restrictions apply and the accuracy speedup can be sacrificed, their construction yields significantly smaller circuits which may be more amenable to near-term quantum computers.

\section{Integrals and Chebyshev Moments of Functions of the Energy}

In this section we introduce some tools we will require for our quantum algorithms for computing the density of states and linear response functions.

Say a Hermitian matrix $A$ has an eigenvalue-eigenvector decomposition $A = \sum_i \lambda_i \ket{\phi_i}\bra{\phi_i}$. Given a block-encoding of $A$, quantum singular value transformation allows us to construct block-encodings of $p(A) = \sum_i p(\lambda_i) \ket{\phi_i}\bra{\phi_i}$, for polynomials $p(x)$. This requires $p(x)$ to be appropriately bounded, and the complexity of the encoding scales linearly in the degree of the polynomial. This method can also be generalized to non-Hermitian $A$ with some caveats. Singular value transformation is an extremely powerful result, and is a culmination of a long line of research in quantum algorithms, presented in its full generality in \cite{1806.01838}.

\begin{lemma} \label{lemma:svt}  Let $A$ have a $Q$-block-encoding, and let $p(x)$ be a degree-$d$ polynomial satisfying $|p(x)| \leq 1$ for $x \in [-1, 1]$. Then for every $\delta > 0$ there exists a $\frac{1}{2}$-scaled $\delta$-accurate $O(Qd)$-block-encoding of $p(A)$. A description of the circuit can be computed in time $\text{poly}\left(d,\log \frac{1}{\delta}\right)$.
\end{lemma}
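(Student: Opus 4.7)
The plan is to reduce $p$ to polynomials of definite parity, apply the quantum singular value transformation (QSVT) construction of~\cite{1806.01838} to each part, and recombine via Lemma~\ref{lemma:lcu}; the $\frac{1}{2}$ scaling and the $\mathrm{poly}(d,\log(1/\delta))$ preprocessing both emerge naturally from these two steps.

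First I would split $p(x) = p_e(x) + p_o(x)$ with $p_e(x) = \tfrac{1}{2}(p(x)+p(-x))$ and $p_o(x) = \tfrac{1}{2}(p(x)-p(-x))$. The triangle inequality gives $|p_e(x)|,|p_o(x)| \leq 1$ on $[-1,1]$, and each has degree at most $d$ and definite parity. Next I would invoke the central QSVT theorem from~\cite{1806.01838}: for a fixed-parity polynomial $q$ of degree $d$ with $|q(x)|\leq 1$ on $[-1,1]$, there exist phase angles $\phi_0,\ldots,\phi_d \in \mathbb{R}$ such that interleaving $U_A$ with $U_A^\dagger$ and single-qubit rotations $e^{i\phi_j Z}$ on one extra ancilla qubit yields an exact $1$-scaled $O(Qd)$-block-encoding of $q(A)$. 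Applied separately to $p_e$ and $p_o$, this produces block-encodings $U_{p_e}$ and $U_{p_o}$, each of complexity $O(Qd)$.

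To combine them I would invoke Lemma~\ref{lemma:lcu} with coefficients $\beta_e = \beta_o = \tfrac{1}{2}$: the resulting LCU circuit is still $O(Qd)$ gates and places $\tfrac{1}{2}p_e(A) + \tfrac{1}{2}p_o(A) = \tfrac{1}{2}p(A)$ in its upper-left block, which is precisely the $\frac{1}{2}$-scaled block-encoding claimed.

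The principal obstacle is the classical computation of the phase angles. To guarantee a $\delta$-accurate block-encoding the $\phi_j$ must be computed to enough precision that errors compounded across the $O(d)$ alternations remain within $\delta$; iterating the sub-multiplicative bound $|UV-U'V'|\leq |U-U'|+|V-V'|$ (the same bound used in the correlation-function proof above) shows per-angle accuracy $O(\delta/d)$ suffices. Rather than derive the angle-finding procedure from scratch, I would cite the numerically stable algorithms developed in~\cite{1806.01838} and subsequent work, which produce a suitable phase sequence in $\mathrm{poly}(d,\log(1/\delta))$ classical time. This is the genuinely hard part of the lemma, and I would treat it as a black box rather than rederive the analysis.
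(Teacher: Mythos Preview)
Your proposal is correct and is essentially the paper's approach: the paper's own proof is nothing more than a pointer to Theorem~56 of~\cite{1806.01838} (plus a citation of~\cite{2003.02831} for the numerically stable phase-angle computation), and what you have written is precisely a sketch of that theorem's even/odd split followed by an LCU recombination, with the same black-box treatment of the angle-finding step. The only difference is granularity: you spell out the parity decomposition and the origin of the $\tfrac{1}{2}$ factor, whereas the paper simply cites the packaged result.
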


\begin{proof} This strategy originated in \cite{1606.02685, 1610.06546} and is developed in \cite{1806.01838} where it is formalized as Theorem~56. Calculating the circuit demands careful consideration of numerical precision. Recent work \cite{2003.02831} describes an elegant strategy for dealing with this issue. 
\end{proof}

The expressions for density of states (\ref{eqn:dos},\ref{eqn:ldos}) and linear response (\ref{eqn:lres}) are both functions of the energy $f(E)$ roughly of the form:
\begin{align} f(E) := \sum_{i}  \delta(E - E_i) \bra{\psi_i}A\ket{\psi_i}\label{eqn:fdef} \end{align}
where $\{E_i\}$ and $\{\ket{\psi_i}\}$ are the eigenvalues and eigenvectors of the Hamiltonian and $A$ is some Hermitian matrix. Rather than computing point-estimates of $f(E)$ we will be interested in computing integrals of $f(E)$ over a range $[a,b]$ as well as the moments of a Chebyshev expansion of $f(E)$. To obtain the scaling requirements of Lemma~\ref{lemma:svt} we observe that an $\alpha$-scaled block-encoding of a Hamiltonian $H$  guarantees that $|H/\alpha| \leq 1$. Rescaling $\bar a = a/\alpha$ and $\bar b = b/\alpha$, we construct a polynomial $w(x)$ that allows us to approximate integrals over the range $[\bar a, \bar b]$:

\begin{theorem} \label{thm:windowfunc}For every $\eta > 0$ and any $\bar a, \bar b$ with $-1 < \bar a < \bar b < 1$ there there exists a polynomial $w(x)$ such that for all $f(\alpha x)$ bounded by $f_\text{max}$ (defined below in (\ref{eq:boundedby})):
\begin{align}\left|\int_{-1}^{1} f(\alpha x) w(x) dx - \int_{\bar a}^{\bar b} f(\alpha x) dx \right| \leq \eta \end{align}
The polynomial has degree $d \in O( \frac{f_\text{max}}{\eta}\ln\frac{f_\text{max}}{\eta }  ) $ and satisfies the requirement $|w(x)| \leq 1$ of Lemma~\ref{lemma:svt}.
\end{theorem}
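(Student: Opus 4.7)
The plan is to build $w(x)$ as a polynomial approximation of the indicator function $\mathbf{1}_{[\bar a, \bar b]}(x)$, assembled from two shifted approximations of the sign function. This is a standard strategy in the quantum singular value transformation literature for constructing ``window polynomials'', and the resulting error analysis simply integrates a pointwise approximation error against $f$.

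First, I will invoke the classical polynomial approximation of the sign function: for every $\eps_0, \delta > 0$ there exists a polynomial $p_{\delta,\eps_0}(x)$ of degree $O((1/\delta)\log(1/\eps_0))$ such that $|p_{\delta,\eps_0}(x)| \leq 1$ on $[-1,1]$ and $|p_{\delta,\eps_0}(x) - \mathrm{sgn}(x)| \leq \eps_0$ whenever $|x| \geq \delta$ (such a construction appears, for example, in \cite{1806.01838}). I will then define
\begin{align}
w(x) := \tfrac{1}{2}\bigl(p_{\delta,\eps_0}(x - \bar a) - p_{\delta,\eps_0}(x - \bar b)\bigr),
\end{align}
possibly after a mild affine rescaling in $x$ so that the shifted arguments remain in $[-1,1]$, which is where the uniform bound is guaranteed. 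By construction $|w(x)| \leq 1$ on $[-1,1]$, matching the hypothesis of Lemma~\ref{lemma:svt}, and $|w(x) - \mathbf{1}_{[\bar a, \bar b]}(x)| \leq \eps_0$ outside the $\delta$-neighborhoods of the two jump points.

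Next, I will bound the integral error by splitting $[-1,1]$ into a ``bulk'' region, where $w$ tracks the indicator up to $\eps_0$, and two ``transition'' windows of combined length at most $4\delta$, where the difference is bounded by $2$. Using the bound $|f(\alpha x)| \leq f_\text{max}$ presumably supplied by (\ref{eq:boundedby}), the integral error is at most $2 \eps_0 f_\text{max} + 8 \delta f_\text{max}$. Setting $\eps_0, \delta \in \Theta(\eta / f_\text{max})$ pushes the total below $\eta$ and yields $d \in O((f_\text{max}/\eta)\log(f_\text{max}/\eta))$, matching the claimed degree.

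The main obstacle I expect is the affine rescaling so that the two shifted copies of $p_{\delta,\eps_0}$ remain legitimate polynomials bounded by $1$ on $[-1,1]$ as $\bar a, \bar b$ approach the endpoints $\pm 1$: in that regime the effective $\delta$ shrinks and the construction degrades, which is exactly why the statement enforces strict inequalities $-1 < \bar a < \bar b < 1$. A secondary concern is that if (\ref{eq:boundedby}) interprets ``bounded by $f_\text{max}$'' as an $L^1$-type rather than a pointwise bound, the bulk-error term is $\eps_0 \|f\|_1$ instead of $2\eps_0 f_\text{max}$, but the same bulk/transition splitting still works and the asymptotic degree is unchanged.
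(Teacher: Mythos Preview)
Your proposal is correct and reaches the stated degree bound, but it takes a genuinely different route from the paper's proof. The paper does not invoke a ready-made sign-function polynomial; instead it applies Jackson's theorem to a piecewise-linear ``trapezoid'' $g(x)$ that equals $1$ on $[\bar a,\bar b]$ and $-1$ outside $[\bar a-\kappa,\bar b+\kappa]$, obtaining a degree-$O(1/\kappa)$ polynomial $J(x)$ within $1/4$ of $g$, and then composes this with an amplifying polynomial $A_k(x)=\Pr[\mathrm{Bin}(k,(1+x)/2)\geq k/2]$ to push the plateau error down to $e^{-k/6}$. The final $w(x)=A_k(\tfrac{4}{5}J(x))$ has degree $nk\in O(\tfrac{1}{\eta}\ln\tfrac{1}{\eta})$, and the error analysis is the same bulk-plus-transition split you describe. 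Your construction is more modular---you delegate all the hard work to the sign polynomial from \cite{1806.01838}---while the paper's is self-contained from Jackson's theorem; both give identical asymptotics. Your anticipation about the bound (\ref{eq:boundedby}) is on point: it is indeed an integral (not pointwise) bound, since $f$ is a sum of delta functions, so the bulk contribution is controlled by $\eps_0\cdot f_{\max}\cdot(\text{length})$ and the transition contribution by $f_{\max}\cdot 4\delta$, exactly as you wrote. The affine-rescaling issue you flagged is real but harmless: replacing $p_{\delta,\eps_0}(x-\bar a)$ by $p_{\delta,\eps_0}((x-\bar a)/2)$ keeps the argument in $[-1,1]$ at the cost of a constant factor in the degree.
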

\begin{proof} There exist several strategies for constructing approximating polynomials for window and step functions, which we could adapt for our purposes via shifting and scaling \cite{dolph, 1409.3305, 1707.05391, 0604324, 1907.11748}. We adapt an elegant approach that relies on standard strategies in approximation theory discussed in \cite{0902.3757} leveraging amplifying polynomials and Jackson's theorem \cite{rivlin} which constructs a polynomial that accomplishes our requirements directly. We postpone the argument to Appendix~\ref{section:window}.
\end{proof}

Our accuracy analysis requires a bound on $f(\alpha x)$, which is a bit subtle to define since $f(\alpha x)$ is a sum of many delta functions. However, we only ever perform integrals of $f(\alpha x)$. Therefore when we say `$f(\alpha x)$ is bounded by $f_\text{max}$' we mean that for all $\bar c<\bar d$:
\begin{align}
    \int_{\bar c}^{\bar d} f(\alpha x) dx \leq f_\text{max} \cdot (\bar d-\bar c) \label{eq:boundedby}
\end{align}

The polynomial $w(x)$ immediately yields a strategy for computing integrals since the value can be expressed as a trace inner product.
\begin{align}
&\int_{a}^b f(E) dE  = \int_{\bar a}^{\bar b} f(\alpha x)\cdot \alpha dx\label{eqn:windowderiv1}\\
&\approx \alpha  \int_{-1}^{1} f(\alpha x) w(x) dx \\
&= \alpha \int_{-1}^{1}\sum_{i}  \delta(\alpha x - E_i) \bra{\psi_i}A\ket{\psi_i} w(x) dx \\
&= \text{Tr}\left( A \sum_{i} \int_{-1}^{1} \delta(x - E_i/\alpha) w(x) dx\ket{\psi_i}\bra{\psi_i} \right)\label{eqn:deltastep}\\
&=\text{Tr}\left( A \sum_{i} w(E_i/\alpha) \ket{\psi_i}\bra{\psi_i} \right)\\
&= \text{Tr}\left(A w(H/\alpha)\right)\label{eqn:windowderiv5}
\end{align}
In step (\ref{eqn:deltastep}) we used the identity $\delta(\alpha x) = \delta(x)/\alpha$. This final expression can then be estimated using Lemma~\ref{lemma:observ}.

Next we briefly outline our strategy for sketching $f(E)$ using the kernel polynomial method \cite{0504627}. A sketch $f^\text{KPM}(E)$ is a linear combination of Chebyshev polynomials of the first kind $T_n(x)$ weighted by coefficients $\mu^f_n g_n$. The $\mu^f_n$ are the Chebychev moments of $f(E)$ and the $g_n$ are $f(E)$-independent smoothing coefficients (see for example the proof of Jackson's theorem in \cite{rivlin}). Since Chebyshev expansions are performed on the domain $[-1,1]$ we calculate moments of $f(\alpha x)$ for $x \in[-1,1]$.

\begin{align}
\mu^f_n &:= \int_{-1}^1 T_n(x) f(\alpha x) dx\\
f^\text{KPM}(\alpha x) &:= \frac{1}{\pi\sqrt{1 - x^2}} \left(g_0 \mu^f_0 + 2 \sum_{n=0}^N \mu^f_ng_n T_n(x)\right)
\end{align}
For this work we concern ourselves only with estimation of $\mu^f_n$ and defer to \cite{0504627, 1811.07387} for details on how to construct $f^\text{KPM}(E)$. A similar derivation to (\ref{eqn:windowderiv1}-\ref{eqn:windowderiv5}) yields the identity:
\begin{align}
\mu^f_n &:= \int_{-1}^1 T_n(x) f(\alpha x) dx = \text{Tr}\left( A T_n(H/\alpha) \label{eqn:mudef}\right)
\end{align}
Conveniently, quantum singular value transformation is particularly simple for Chebyshev polynomials.

\begin{lemma}\label{lemma:cheby} Let $A$ have a $Q$-block-encoding. Then for every $n$ there exists an $O(nQ)$-block-encoding of $T_n(A).$
\end{lemma}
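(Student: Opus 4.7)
The plan is to specialize the qubitization / quantum signal processing construction of \cite{1806.01838} to the case of trivial phase angles. Lemma~\ref{lemma:svt} almost applies directly, since $T_n$ has degree $n$ and satisfies $|T_n(x)|\le 1$ on $[-1,1]$, but that route would yield only a $\tfrac12$-scaled, $\delta$-approximate encoding. Chebyshev polynomials are exactly the family that admits an \emph{exact}, $1$-scaled realization, precisely because the QSP phase angles that produce them can all be taken to vanish, so no numerical phase synthesis is required.

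Concretely, I would set $\Pi = \ket{0}_k\bra{0}_k \otimes I$, so that the block-encoding property reads $\Pi U_A \Pi = \ket{0}_k\bra{0}_k \otimes A$, and let $R = 2\Pi - I$ be the associated reflection. Define the walk operator $W := R\,U_A$, and exhibit $W^n$ (possibly with a trailing $U_A$ and an ancilla basis change to handle the parity of $n$, absorbed into the $O(1)$ overhead) as the desired $O(nQ)$-block-encoding of $T_n(A)$. Each $R$ is essentially a generalized multi-controlled-$Z$ on the ancilla register, implementable with $O(\log k) \subseteq O(Q)$ gates, and each use of $U_A$ costs $Q$ gates, so the total gate complexity of $W^n$ is $O(nQ)$ as required.

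The substantive content is the identity $\Pi W^n \Pi = \ket{0}_k\bra{0}_k \otimes T_n(A)$, which is the standard qubitization statement. Assuming $A$ is Hermitian (which it is in every application of this lemma in the paper, where $A = H/\alpha$), one shows that for each eigenvector $\ket{\phi_i}$ of $A$ with eigenvalue $\lambda_i$ there is a two-dimensional subspace, spanned by $\ket{0}_k\ket{\phi_i}$ and the normalized component of $U_A\ket{0}_k\ket{\phi_i}$ orthogonal to $\Pi$, that is invariant under both $U_A$ and $R$. On this subspace $U_A$ acts as a Hermitian reflection with eigenvalues $\pm 1$ and $R$ acts as Pauli $Z$, so $W = RU_A$ is a rotation by angle $\arccos \lambda_i$; the top-left matrix element of $W^n$ is therefore $\cos(n \arccos \lambda_i) = T_n(\lambda_i)$. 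The main obstacle---and the only step that is not elementary---is establishing the two-dimensional invariant-subspace decomposition for a generic block-encoding, which in general requires a mild qubitization preprocessing that adds at most an $O(1)$ factor to $Q$. This is exactly the content of the qubitization lemma of \cite{1806.01838}, which I would cite directly rather than re-derive.
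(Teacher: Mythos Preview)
Your proposal is correct and amounts to the same approach as the paper: both defer to \cite{1806.01838}, where the paper's own proof is the single sentence ``This is Lemma~9 of \cite{1806.01838}.'' You simply add a (correct) sketch of the qubitization mechanism and correctly flag that the invariant-subspace structure for a generic $U_A$ requires the qubitization preprocessing from that reference; the paper skips even this level of detail.
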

\begin{proof} This is Lemma~9 of \cite{1806.01838}.\end{proof}

Now we have all the technical tools to state the main algorithms. 

\section{Density of States}

In this section we show how to sketch the density of states (DOS):
\begin{align} \rho(E) = \frac{1}{D} \sum_i \delta(E_i - E).\end{align}
This is easily rewritten in the form in (\ref{eqn:fdef}) by choosing $A = I/D$. Following (\ref{eqn:windowderiv1}-\ref{eqn:windowderiv5}) and (\ref{eqn:mudef}) we obtain:
\begin{align}
    \int_{a}^b \rho(E) dE &\approx \text{Tr}\left( \frac{I}{D} w(H/\alpha) \right)\\
    \mu^{\rho}_n &= \text{Tr}\left( \frac{I}{D} T_n(H/\alpha) \right)
\end{align}
This argument makes use of of Theorem~\ref{thm:windowfunc} which requires a bound on $\rho(E)$. Observe that in the sense of (\ref{eq:boundedby}), $\rho(\alpha x)$ is bounded by any upper bound on the dimension of the largest eigenspace of $H$ which we call $\rho_\text{max}$.

These quantities can be estimated by leveraging the fact that $I/D$ has an $O(\log(D))$-preparation unitary.

\begin{theorem} Let $H$ have an $\alpha$-scaled $Q$-block-encoding and take any $\eps, \delta > 0$. Then:
\begin{enumerate}
 \item For any $a,b$ such that $-\alpha < a < b < \alpha$ there exists a quantum algorithm that produces an estimate $\xi$ of $\int_{a}^b \rho(E) dE$ with circuit complexity
     \begin{align}O\left(\left( Q\cdot \frac{\rho_\text{max}}{\eps}\log  \frac{\rho_\text{max}}{\eps} + \log D\right) \cdot \frac{1}{\eps} \log\frac{1}{\delta}    \right) \label{eqn:densintcomplexity}\end{align}
         and $O(\text{poly}(\rho_\text{max}/\eps))$ classical pre-processing, where $\rho_\text{max}$ is some upper bound on the dimension of the largest eigenspace of $H$. 
 \item For any $n$ there exists a quantum algorithm that produces an estimate $\zeta$ of $\mu^\rho_n$ with circuit complexity
     \begin{align}O\left(\left(Q\cdot n+ \log D \right) \cdot  \frac{1}{\eps} \log\frac{1}{\delta} \right).\label{eqn:denschebycomplexity}\end{align}
\end{enumerate}
The estimates $\xi$ and $\zeta$ have error $\eps$ with probability at least $(1-\delta)$.\label{thm:dosalg}
\end{theorem}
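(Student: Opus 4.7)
My plan is to reduce both estimation problems to trace inner products against the maximally mixed state, so that Lemma~\ref{lemma:observ} applies directly. The identities $\int_a^b \rho(E)\,dE \approx \Tr\bigl((I/D)\,w(H/\alpha)\bigr)$ and $\mu^\rho_n = \Tr\bigl((I/D)\,T_n(H/\alpha)\bigr)$ are already derived in (\ref{eqn:windowderiv1}--\ref{eqn:windowderiv5}) and (\ref{eqn:mudef}) with $A=I/D$, so the only ingredients I still need are a block-encoding of the relevant function of $H/\alpha$ and a preparation-unitary for $I/D$. The maximally mixed state on $\log D$ qubits admits the obvious $O(\log D)$-preparation-unitary: apply a Hadamard to each of $\log D$ ancilla qubits, then CNOT each ancilla into the corresponding system qubit, producing the maximally-entangled purification $\tfrac{1}{\sqrt{D}}\sum_i\ket{i}\ket{i}$. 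This supplies the ``$R=O(\log D)$'' input in both parts.

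For part~1, I would first invoke Theorem~\ref{thm:windowfunc} with $\eta=\eps/2$ and $f_\text{max}=\rho_\text{max}$ (justified by the remark preceding the theorem statement) to obtain a degree-$d$ polynomial $w(x)$ with $d\in O\bigl(\tfrac{\rho_\text{max}}{\eps}\log\tfrac{\rho_\text{max}}{\eps}\bigr)$ and $|w(x)|\leq 1$ on $[-1,1]$, giving window-approximation error at most $\eps/2$. I then apply Lemma~\ref{lemma:svt} with a sufficiently small SVT accuracy $\delta'$ to obtain a $\tfrac{1}{2}$-scaled $O(Qd)$-block-encoding of $w(H/\alpha)$, which is Hermitian since $w$ is a real polynomial and $H/\alpha$ is Hermitian. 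Feeding this block-encoding (with scaling $1/2$), the maximally-mixed purification, target accuracy $\eps/2$, and failure probability $\delta$ into Lemma~\ref{lemma:observ} produces the estimate $\xi$ with circuit complexity $O\bigl((\log D + Qd)\cdot\tfrac{1/2}{\eps/2}\log\tfrac{1}{\delta}\bigr)$, which simplifies to the bound (\ref{eqn:densintcomplexity}). The classical pre-processing cost $\mathrm{poly}(d,\log(1/\delta'))$ from Lemma~\ref{lemma:svt} is bounded by $\mathrm{poly}(\rho_\text{max}/\eps)$ after selecting $\delta'$ as below.

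For part~2, Lemma~\ref{lemma:cheby} immediately supplies an $O(nQ)$-block-encoding of $T_n(H/\alpha)$ that is $1$-scaled, since $|T_n(x)|\leq 1$ on $[-1,1]$ and $H/\alpha$ has spectrum in $[-1,1]$. Feeding this into Lemma~\ref{lemma:observ} with the same $O(\log D)$ preparation unitary, target accuracy $\eps$, and failure probability $\delta$ produces the estimate $\zeta$ in $O\bigl((\log D+nQ)\cdot\tfrac{1}{\eps}\log\tfrac{1}{\delta}\bigr)$ gates, matching (\ref{eqn:denschebycomplexity}).

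The main delicate step is the error budgeting in part~1: the polynomial-approximation error from Theorem~\ref{thm:windowfunc}, the SVT approximation error $\delta'$ from Lemma~\ref{lemma:svt} (which perturbs $w(H/\alpha)$ in spectral norm and hence perturbs $\Tr((I/D) w(H/\alpha))$ by at most $\delta'$), and the amplitude-estimation error from Lemma~\ref{lemma:observ} must combine to a final error of $\eps$. Allocating $\eps/2$ to the window approximation, $\eps/2$ to amplitude estimation, and choosing $\delta'=\Theta(\eps)$ suffices; the latter adds only a $\log(1/\eps)$ factor to the classical pre-processing and leaves the asymptotic quantum gate counts unchanged, closing the argument.
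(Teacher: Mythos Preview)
Your approach is essentially the same as the paper's: prepare a purification of $I/D$, build a block-encoding of $w(H/\alpha)$ or $T_n(H/\alpha)$, and apply Lemma~\ref{lemma:observ}. Two small points of difference are worth noting. First, your error budget in part~1 does not quite close: $\eps/2+\eps/2+\delta'$ with $\delta'=\Theta(\eps)$ exceeds $\eps$; the paper simply allocates $\eps/3$ to each of the three sources (window, SVT, amplitude estimation), which fixes this without changing any asymptotics. Second, your Hadamard-plus-CNOT construction of the purification implicitly assumes $D$ is a power of two; the paper spends a paragraph handling general $D$ by embedding $\mathcal H$ in $\mathbb C^{2^{\lceil\log_2 D\rceil}}$ and using exact (angle-tuned) amplitude amplification to land on $\ket{\mathrm{Bell}(\mathcal H)}$ in $O(1)$ Grover iterations, so the preparation cost remains $O(\log D)$.
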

\begin{proof} Observe that a preparation unitary for $I/D$ simply prepares a Bell state on $\mathcal{H}\otimes \mathcal{H}$, call it $\ket{\text{Bell}(\mathcal{H})}$. If $\mathcal{H}$ is encoded as some subspace of a $n$-qubit system where $n = \lceil \log_2(D)\rceil$ then $\ket{\text{Bell}(\mathcal{H})}$ can be obtained from $\ket{\text{Bell}(\mathbb{C}^{2^n})}$ via amplitude amplification. This procedure can be made exact via the following standard trick involving an ancilla qubit. Observe that \begin{align}\beta := \braket{\text{Bell}(\mathbb{C}^{2^n})|\text{Bell}(\mathcal{H})} = \sqrt{D / 2^n}\end{align} is known exactly. If $U$ satisfies
\begin{align}U\ket{0^{2n}} &= \ket{\text{Bell}(\mathbb{C}^{2^n})}\\ &= \beta \ket{\text{Bell}(\mathcal{H})} + \sqrt{1-\beta^2}\ket{\phi_\perp} \end{align}
for some $\ket{\phi_\perp} \perp \ket{\text{Bell}(\mathbb{C}^{2^n})}$ then define $U'$ such that:
\begin{align}U'\ket{0^{2n+1}} &= \gamma U\ket{0^{2n}}\ket{0} + \sqrt{1-\gamma^2}\ket{0^{2n}}\ket{1}\\
&= \gamma\beta \ket{\text{Bell}(\mathcal{H})}\ket{0} + \sqrt{1-(\gamma\beta)^2}\ket{\psi_\perp} \end{align}
for some $\ket{\phi_\perp} \perp \ket{\text{Bell}(\mathbb{C}^{2^n})}\ket{0}$ where $\gamma$ is the largest number $\leq 1$ such that \begin{align}\sin( (2k+1) \arcsin(\gamma\beta)) = 1\end{align} has a solution where $k$ is a positive integer. Then, if $\theta = \arcsin(\gamma\beta)$ and $\Pi_\mathcal{H}$ is a projection onto the $\mathcal{H}\otimes \text{span}(\ket{0}\bra{0})$ subspace of $\mathbb{C}^{2n+1}$, then we can define a Grover operator $G$ that exactly prepares $\ket{\text{Bell}(\mathcal{H})}$.
\begin{align}
G = U'(I - &2\ket{0^{2n+1}}\bra{0^{2n+1}})(U')^\dagger(I - 2\Pi_\mathcal{H})\\
G^k \ket{\text{Bell}(\mathbb{C}^{2^n})} &= \sin((2k+1) \theta))\ket{\text{Bell}(\mathcal{H})}\ket{0} \nonumber\\ &+ \cos((2k+1) \theta)\ket{\psi_\perp}\\
&= \ket{\text{Bell}(\mathcal{H})}\ket{0}
\end{align}

Since $2^n < 2D$ we have $\beta \in \Omega(1)$ so $k \in O(1)$, so the circuit complexity is dominated by $U$, which can be constructed using $n$ Hadamard gates and $n$ CNOT gates. Thus the state $I/D$ on a Hilbert space $\mathcal{H}$ encoded in $\mathbb{C}^n$ has an $O(\log(D))$-preparation-unitary. \\

The algorithm for estimating integrals is as follows:

\patbox{
\textbf{Algorithm: Integral of the Density of States}
\begin{enumerate}
    \item Use Theorem~\ref{thm:windowfunc} to construct the polynomial $w(x)$ with $\eta := \frac{\eps}{3}$.
    \item Use Lemma~\ref{lemma:svt} to construct an $\frac{\eps}{3}$-accurate $\frac{1}{2}$-scaled block-encoding of $w(H/\alpha)$. Say that this is an exact $\frac{1}{2}$-scaled block-encoding of $\tilde w(H/\alpha)$.
    \item Use Lemma~\ref{lemma:observ} to produce an $\frac{\eps}{3}$-accurate estimate $\xi$ of $\text{Tr}\left(\frac{I}{D} \cdot \tilde w(H/\alpha) \right)$ with probability at least $(1-\delta)$.
\end{enumerate}
}

By the triangle inequality the total error is at most $\eps$.
The polynomial $w(x)$ has degree:
\begin{align}
d \in  O\left( \frac{\rho_\text{max}}{\eps} \log \frac{\rho_\text{max}}{\eps} \right)
\end{align}
The approximate block-encoding of $\tilde w(H/\alpha)$ has circuit complexity $O(dQ)$ and the preparation unitary for $I/D$ has circuit complexity $\log D$. Combining these with the number of samples required by Lemma~\ref{lemma:observ} gives the overall complexity (\ref{eqn:densintcomplexity}).

The algorithm for Chebyshev Moments is significantly simpler:

\patbox{
\textbf{Algorithm: Chebyshev Moments of Density of States}
\begin{enumerate}
    \item Use Lemma~\ref{lemma:cheby} to construct a block-encoding of $T_n(H/\alpha)$.
    \item Use Lemma~\ref{lemma:observ} to produce an $\eps$-accurate estimate $\zeta$ of $\text{Tr}\left(\frac{I}{D} \cdot T_n(H/\alpha)\right)$ with probability at least $(1-\delta)$.
\end{enumerate}
}

Since the block-encoding and state preparation are exact, the error stems entirely from the estimation procedure in Lemma~\ref{lemma:observ}. The circuit complexity from Lemma~\ref{lemma:cheby} is $O(nQ)$, so the overall complexity (\ref{eqn:denschebycomplexity})  also follows from Lemma~\ref{lemma:observ}.
\end{proof}

Estimation of integrals of $\rho(E)$ benefit from knowledge of an upper bound $\rho_\text{max}$. Indeed even in pathological cases where $H \propto I$ we have $\rho_\text{max} = 1$, so the circuit complexity can never suffer from high densities of state. We argue that in practical situations prior information on $H$ can be used to bound $\rho_\text{max}$, thereby improving the complexity. For example, the DOS of quantum many body systems with local interactions is often close to a Gaussian due to the central limit theorem. In particular, \cite{0406100} discusses the DOS of a nearest-neighbor Hamiltonian acting on a spin chain. From their work on the transverse-field Ising model with $n$ sites we can derive:
$$\rho_\text{max} = \frac{C}{D}\binom{n}{n/2} \approx C\pi\sqrt{\frac{2}{n}}  $$
for some constant $C$ (see the discussion surrounding equation 30 in \cite{0406100}). Here $\rho_\text{max}$ decreases with the number of sites.

Furthermore, exact degeneracy in a Hamiltonian is connected to the Hamiltonian's symmetries \cite{1608.02600}. If there exists a degenerate subspace of dimension $D\rho_\text{max}$ then any unitary transformations on that subspace must preserve the Hamiltonian. Thus, prior knowledge of the symmetries could be used to obtain a bound on $\rho_\text{max}$. However, if only a subset of the symmetries is known then this only leads to a lower bound on the dimension of the largest eigenspace, which is not useful here.

Of course, the efficiency of the algorithm relies on the $1/D$ factor in our definition of $\rho(E)$. If we were interested in the actual number of states within an interval, the circuit complexity would scale with $D$ (for fixed $\eps$). This is to be expected since the number of states in the ground space of a Hamiltonian is \#$\mathsf{P}$-hard to compute exactly and $\mathsf{NP}$-hard to estimate to within relative error \cite{1010.3060}.

Next we consider the local density of states. Say we are working with a Hamiltonian describing a single particle in real space or some space with a notion of locality so that for every position $\vec r$ there is a state $\ket{\psi(\vec r)}$ denoting the state with the particle at $\vec r$. Then local density of states (LDOS) at $\vec r$ is given by \cite{0504627, diventra, 1309.5730}:
\begin{align} \rho_{\vec r}(E) = \sum_i \delta(E_i - E) |\braket{\psi_i | \vec r}|^2  \end{align}

The algorithms for sketching the LDOS are a simple modification of the algorithms for DOS: instead of preparing a maximally mixed state we simply prepare $\ket{\psi(\vec r)}$. Indeed if $\ket{\psi(\vec r)}$ has an $O(R)$-preparation unitary, the new circuit complexities are the same as those in Theorem~\ref{thm:dosalg} but with $\log D$ replaced with $R$.

If $H$ is a lattice Hamiltonian, e.g. a Fermi-Hubbard model, then the states $\ket{\psi(\vec r)}$ are trivial to prepare since the Jordan-Wigner transformation that maps $H$ to qubits preserves locality. For Hamiltonians describing a particle in real-space, the cost of preparing $\ket{\psi(\vec r)}$ depends on the particular choice of basis functions, e.g. Hartree-Fock, used to encode $H$ on the quantum computer.

Similarly to the DOS, estimation of LDOS can benefit from bounds on $\rho_\text{max}$ and it remains true that even for pathological Hamiltonians like $H \propto I$ we have $\rho_\text{max} \leq 1$. However, it no longer makes sense to bound $\rho_\text{max}$ via a central limit theorem since there is only one particle involved. 

\section{Linear Response}

In this section we show how to sketch correlation functions of the form:
\begin{align} A(E - E_0) = \left\langle  B \delta(E - H)C  \right\rangle \label{eqn:lres} \end{align}

 We shift the function by the ground state energy $E_0$ since we consider estimation of the ground state energy out of scope. This work improves on an quantum algorithm by \cite{1804.01505} and is useful to compare to a classical algorithm based on matrix product states \cite{1101.5895} that also uses the kernel polynomial method.

Following a similar argument to (\ref{eqn:windowderiv1}-\ref{eqn:windowderiv5}) and (\ref{eqn:mudef}), we connect the desired quantities to expectations of observables that can be represented by block-encodings:
\begin{align}
    \int_a^b A(E-E_0) dE &\approx \left\langle B w(H/\alpha) C \right\rangle\\
    \mu^A_n &= \left\langle B T_n(H/\alpha) C \right\rangle
\end{align}

This naturally yields quantum algorithms quite similar to those presented in Theorem~\ref{thm:dosalg}, just with some constants changed.

\begin{theorem} Let:
\begin{itemize}
    \item $H$ have an $\alpha$-scaled $Q$-block-encoding,
    \item $\rho$ have an $R$-preparation-unitary,
    \item $B$ have $\beta$-scaled $S_B$-block-encoding and $C$ have $\gamma$-scaled $S_C$-block-encoding.
\end{itemize}
\hspace{1cm}

Then for any $\eps,\delta > 0$:
\begin{enumerate}
 \item For any $a,b$ such that $-\alpha < a < b < \alpha$ there exists a quantum algorithm that produces an estimate $\xi$ of $\int_{a}^b A(E) dE$ with circuit complexity
 \begin{align}O\left( \left( Q d + S_B + S_C + R\right) \cdot \frac{\beta\gamma}{\eps} \log\frac{1}{\delta} \right) \label{eqn:respintcomplexity}\end{align}
 and $O(\text{poly}(d))$ classical pre-processing, where $\rho_\text{max}$ a bound on the dimension of the largest eigenspace and 
 \begin{align}d =O\left( \frac{\rho_\text{max} \beta\gamma}{\eps}\log\frac{\rho_\text{max} \beta\gamma}{\eps}\right).\end{align}
 \item For any $n$ there exists a quantum algorithm that produces an estimate $\zeta$ of $\mu^A_n$ with circuit complexity
 \begin{align}O\left((Qn + S_B + S_C + R)\cdot \frac{\beta\gamma}{\eps}  \right).\label{eqn:respchebycomplexity}\end{align}
\end{enumerate}
The estimates $\xi$ and $\zeta$ have error $\eps$ with probability at least $(1-\delta)$ in their real and imaginary parts.\label{thm:respalg}

\end{theorem}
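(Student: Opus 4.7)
The plan is to mirror the structure of Theorem~\ref{thm:dosalg}, using the identities $\int_a^b A(E-E_0)\,dE \approx \langle B w(H/\alpha) C\rangle$ and $\mu^A_n = \langle B T_n(H/\alpha) C\rangle$ derived just before the statement. The main new ingredients relative to the DOS case are (i) the observable is now a product $B\,f(H/\alpha)\,C$ instead of $f(H/\alpha)$ alone, which multiplies the block-encoding scaling by $\beta\gamma$; (ii) this product is in general not Hermitian, so we must split into Hermitian and anti-Hermitian parts as in the correlation-function theorem; and (iii) the state $\rho$ is supplied as an $R$-preparation-unitary instead of the trivially prepared maximally mixed state.

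First I would handle the Chebyshev moment part, since it is cleaner. Use Lemma~\ref{lemma:cheby} to build an $O(nQ)$-block-encoding of $T_n(H/\alpha)$, then apply Lemma~\ref{lemma:lcu} part~1 to multiply with the block-encodings of $B$ and $C$, obtaining a $\beta\gamma$-scaled $O(Qn+S_B+S_C)$-block-encoding $U_\Gamma$ of $\Gamma := B T_n(H/\alpha) C$. Then, as in the correlation-function proof, note that $U_\Gamma^\dagger$ block-encodes $\Gamma^\dagger$, use Lemma~\ref{lemma:lcu} part~2 to construct $\beta\gamma$-scaled block-encodings of $(\Gamma+\Gamma^\dagger)/2$ and $(\Gamma-\Gamma^\dagger)/(2i)$, and invoke Lemma~\ref{lemma:observ} against $\rho$ with target accuracy $\eps$ and failure probability $\delta/2$ on each. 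Each call costs $O((R+Qn+S_B+S_C)\cdot\beta\gamma/\eps \cdot \log(1/\delta))$, yielding the bound (\ref{eqn:respchebycomplexity}).

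For the integral part I would run the same pipeline but insert a polynomial approximation step up front. Set $\eta := \eps/(3\beta\gamma)$ and use Theorem~\ref{thm:windowfunc} to get $w(x)$ of degree $d\in O((\rho_\text{max}\beta\gamma/\eps)\log(\rho_\text{max}\beta\gamma/\eps))$; the factor $\beta\gamma$ in $\eta$ is what generates the matching factor in $d$. Apply Lemma~\ref{lemma:svt} to produce an $\eps/(3\beta\gamma)$-accurate, $\tfrac{1}{2}$-scaled $O(Qd)$-block-encoding, which is exact for some $\tilde w(H/\alpha)$ within $\eps/(3\beta\gamma)$ of $w(H/\alpha)$. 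Multiply on the left and right by the block-encodings of $B$ and $C$ to get a $(\beta\gamma/2)$-scaled block-encoding of $\tilde\Gamma := B\,\tilde w(H/\alpha)\,C$, split into Hermitian/anti-Hermitian parts, and apply Lemma~\ref{lemma:observ} with target accuracy $\eps/3$. Summing the three error contributions (polynomial approximation of the window, SVT error, and observable-estimation error, each scaled appropriately) via Lemma~54 of \cite{1806.01838} bounds the total additive error by $\eps$, giving (\ref{eqn:respintcomplexity}).

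The main obstacle is keeping the error budget and the scaling factors straight: the polynomial error $\eta$ enters multiplied by $\beta\gamma$ through the outer observables, so one must thread the $\beta\gamma$ factor into $\eta$ (and hence into $d$) rather than only into the final amplitude-estimation target. A secondary subtlety is that $\tilde\Gamma$ is genuinely non-Hermitian even when $B$ and $C$ are Hermitian (since $[B,\tilde w(H/\alpha)]$ and $[C,\tilde w(H/\alpha)]$ need not vanish), so the Hermitian/anti-Hermitian splitting is required for both parts of the theorem; this also justifies why the final estimate is complex and why the accuracy statement is phrased in terms of the real and imaginary parts.
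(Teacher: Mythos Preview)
Your proposal is essentially the paper's proof: build the polynomial (window or Chebyshev) block-encoding of $H/\alpha$, sandwich it with $B$ and $C$ via Lemma~\ref{lemma:lcu}, split into Hermitian/anti-Hermitian parts as in (\ref{eqn:herm},\ref{eqn:antiherm}), and invoke Lemma~\ref{lemma:observ}. The Chebyshev-moment part matches the paper verbatim.

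There is one point where your accounting differs from the paper's, and your stated justification for it is not quite right. You set $\eta:=\eps/(3\beta\gamma)$ and say ``the polynomial error $\eta$ enters multiplied by $\beta\gamma$ through the outer observables.'' It does not: the bound in Theorem~\ref{thm:windowfunc} is already a bound on $\left|\int_{-1}^{1}A(\alpha x)w(x)\,dx-\int_{\bar a}^{\bar b}A(\alpha x)\,dx\right|$, i.e.\ on the final scalar, so no further $\beta\gamma$ factor appears. In the paper the $\beta\gamma$ enters $d$ through $f_\text{max}$: since $|B|\le\beta$ and $|C|\le\gamma$, the linear response $A(\alpha x)$ is bounded (in the sense of~(\ref{eq:boundedby})) by $\rho_\text{max}\beta\gamma$, and the paper simply takes $\eta=\eps/3$ with $f_\text{max}=\rho_\text{max}\beta\gamma$. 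Your choice of $\eta=\eps/(3\beta\gamma)$ together with (implicitly) $f_\text{max}=\rho_\text{max}$ happens to produce the same $d$, but $f_\text{max}=\rho_\text{max}$ is not a valid bound on $A(\alpha x)$, so your invocation of Theorem~\ref{thm:windowfunc} is not justified as written. The quantity that genuinely does pick up a $\beta\gamma$ factor when sandwiched is the \emph{operator-norm} SVT error from Lemma~\ref{lemma:svt}; your instinct to tighten that to $\eps/(3\beta\gamma)$ is sound (and arguably more careful than the paper's $\eps/3$), but it only affects a logarithmic factor in the complexity, not the leading term in $d$.
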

\begin{proof} The algorithm for computing integrals is as follows:

\patbox{
\textbf{Algorithm: Integrals of Linear Response Functions}
\begin{enumerate}
 \item Use Theorem~\ref{thm:windowfunc} to construct the polynomial $w(x)$ with $\eta := \frac{\eps}{3}$.
 \item Use Lemma~\ref{lemma:svt} to construct an $\frac{\eps}{3}$-accurate $\frac{1}{2}$-scaled block-encoding of $w(H/\alpha)$, and say it is an exact $\frac{1}{2}$-scaled block-encoding of $ \tilde w(H/\alpha)$.
 \item Use Lemma~\ref{lemma:lcu} to construct a $\frac{1}{2}\beta\gamma$-scaled block-encoding of $\Xi := B \tilde w(H/\alpha) C$.
 \item Use Lemma~\ref{lemma:observ} to produce an $\frac{\eps}{3}$-accurate estimates of the real and imaginary parts of $\xi$ with probability at least $(1-\delta)$, corresponding to the Hermitian and anti-Hermitian parts of $\Xi$ as in (\ref{eqn:herm},\ref{eqn:antiherm}).

\end{enumerate}
}

The accuracy and complexity analysis is almost identical to that in Theorem~\ref{thm:dosalg}, except for the fact that since $|B| \leq \beta$ and $|C| \leq \gamma$ we observe that $A(\alpha  x) $ is bounded by $\rho_\text{max}\beta\gamma$ when invoking Theorem~\ref{thm:windowfunc}. The algorithm for Chebyshev moments is as follows:
 
 \patbox{
\textbf{Algorithm: Chebyshev Moments of Linear Response Functions}
\begin{enumerate}
 \item Use Lemma~\ref{lemma:cheby} to construct a block-encoding of $T_n(H/\alpha)$.
 \item Use Lemma~\ref{lemma:lcu} to construct a $\beta\gamma$-scaled block-encoding of $Z := B T_n(H/\alpha)C$.
 \item Use Lemma~\ref{lemma:observ} to produce an $\eps$-accurate estimates of the real and imaginary parts of $\zeta$ with probability at least $(1-\delta)$, corresponding to the Hermitian and anti-Hermitian parts of $Z$ as in (\ref{eqn:herm},\ref{eqn:antiherm}).
\end{enumerate}
}
 
\end{proof}

This technique is significantly more versatile than that of \cite{1804.01505}, which only treats the case when $B = C$ and when $\rho = \ket{\psi_0}\bra{\psi_0}$. Their algorithm runs Hamiltonian simulation under $B$ for a short amount of time to approximately prepare the state $B\ket{\psi_0}$, which is an additional source of error. Furthermore their work also does not capitalize on accuracy improvements from amplitude estimation.

The classical strategy \cite{1101.5895} relies on Matrix Product State (MPS) representations of states $\ket{t_n} = T_n(H/\alpha)C\ket{\psi_0}$. When accurate and efficient MPS representations of $\ket{t_n}$ exist (and $\ket{\psi_0}$ can be efficiently obtained - an assumption we also make), then quantum strategies are not needed. Indeed for many physical systems ground states obey area laws (see e.g. \cite{1905.11337}), which lends MPS strategies their power. Quantum strategies will still be useful for ground states with large amounts of entanglement where efficient classical representations do not exist.

\section{Conclusion}

We have demonstrated that block-encodings provide a powerful framework for the matrix arithmetic on a quantum computer. This modern and versatile toolkit for quantum algorithms encompasses fundamental strategies such as amplitude amplification and estimation, and novel results in active areas like Hamiltonian simulation can be immediately leveraged due to its modularity. Furthermore, once all the necessary tools are assembled, algorithms based on block-encodings are trivial to analyze. We believe that block-encodings are the state-of-the-art technique for estimating physical quantities on a quantum computer. This claim should be further tested by attempting to quantize other numerical strategies in condensed matter physics. 

\section{Acknowledgements}

The author thanks Andras Gilyen, Andrew Potter,  Justin Thaler, Chunhao Wang, Alexander Weisse and Alessandro Roggero for helpful discussions. This work was supported by Scott Aaronson's Vannevar Bush Faculty Fellowship.

This manuscript reflects changes from peer review.

\clearpage
\appendix
\section{Constructing a Polynomial Approximation of the Window Function\label{section:window}}

\vspace{-2mm}
In this section we prove Theorem~\ref{thm:windowfunc} by following a construction in \cite{0902.3757}. We make use of an important theorem in approximation theory:
\vspace{-2mm}

\begin{theorem*} (Jackson's Theorem \cite{rivlin}.) For any continuous function $g(x)$ on the interval $[-1,1]$ there exists a polynomial $J(x)$ of degree at most $n$ so that for all $x \in [-1,1]$:
\vspace{-2mm}
\begin{align} |J(x) - g(x)| \leq 6\omega_g(1/n), \end{align}
\vspace{-2mm}
where $\omega_g(\delta)$ is the modulus of continuity of $g(x)$:
\vspace{-1mm}
\begin{align}\omega_g(\delta)  := \text{sup}\big\{& \text{ } |g(x) - g(y)|\nonumber\\ &\text{ for } x,y \in [-1,1] \text{ with } |x-y| \leq \delta \big\}. \end{align}
\end{theorem*}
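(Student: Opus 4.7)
The plan is to prove Jackson's theorem via the classical Jackson kernel convolution argument. First, I would reduce algebraic approximation on $[-1,1]$ to trigonometric approximation on the circle by the substitution $x = \cos\theta$. Setting $\tilde g(\theta) := g(\cos\theta)$ gives an even, $2\pi$-periodic continuous function with $\omega_{\tilde g}(\delta) \leq \omega_g(\delta)$, since $|\cos\theta_1 - \cos\theta_2| \leq |\theta_1 - \theta_2|$. Any even trigonometric polynomial of degree $m$ in $\theta$ is of the form $P(\cos\theta)$ for a unique algebraic polynomial $P$ of degree $m$, so approximating $\tilde g$ by an even trigonometric polynomial translates directly into approximating $g$ by $P$.

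Next, I would introduce the Jackson kernel
\begin{align} K_m(\theta) := \gamma_m \left(\frac{\sin(m\theta/2)}{\sin(\theta/2)}\right)^4, \end{align}
with $\gamma_m$ chosen so that $\frac{1}{2\pi}\int_{-\pi}^{\pi}K_m(\theta)\,d\theta = 1$. This is a non-negative even trigonometric polynomial of degree $2(m-1)$. The two analytic estimates I would need to establish are the normalization constant $\gamma_m = \Theta(m^{-3})$ and the first-moment bound
\begin{align} \frac{1}{2\pi}\int_{-\pi}^{\pi} |\theta|\, K_m(\theta)\, d\theta \leq \frac{c}{m}, \end{align}
with an explicit constant $c$. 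The fourth power in the kernel, as opposed to the square used in Fej\'er's kernel, is exactly what upgrades the first-moment estimate from $O(\log m / m)$ to $O(1/m)$, and is therefore the reason Jackson's theorem gives the rate $\omega_g(1/n)$ rather than $\omega_g(1/n)\log n$.

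I would then define the approximant by convolution,
\begin{align} J(\theta) := \frac{1}{2\pi}\int_{-\pi}^{\pi} \tilde g(\theta - \phi)\, K_m(\phi)\, d\phi, \end{align}
which is automatically an even trigonometric polynomial of degree $\leq 2(m-1)$, hence $J(\theta) = P(\cos\theta)$ for an algebraic $P$ of the same degree; choosing $m$ so that $2(m-1) \leq n$ meets the degree requirement. Using non-negativity and unit mass of $K_m$, the error becomes
\begin{align} |J(\theta) - \tilde g(\theta)| \leq \frac{1}{2\pi}\int_{-\pi}^{\pi}\omega_{\tilde g}(|\phi|)\, K_m(\phi)\, d\phi, \end{align}
and invoking the standard submultiplicativity-type inequality $\omega_{\tilde g}(\lambda \delta) \leq (1+\lambda)\omega_{\tilde g}(\delta)$ with $\delta = 1/m$ and $\lambda = m|\phi|$ reduces the right-hand side to $\omega_{\tilde g}(1/m)\bigl(1 + m \cdot c/m\bigr) = (1+c)\,\omega_{\tilde g}(1/m)$, which is bounded by a constant times $\omega_g(1/n)$ after reindexing.

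The main obstacle will be getting the constant down to exactly $6$ as stated. This requires tight rather than asymptotic control of $\gamma_m$ and of $\int |\theta| K_m\, d\theta$, and a careful accounting of two losses: a factor $\approx 2$ from the degree doubling inherent in the fourth-power kernel (forcing $m \approx n/2$), and the factor $(1+c)$ from the moment estimate. Rivlin's textbook treatment sharpens both constants by working with a slightly rescaled kernel and by partitioning the $\phi$-integral into a near region $|\phi|\leq 1/m$ and a far region where the kernel decays as $\sin(\phi/2)^{-4}$; I would follow that route and verify the numerics, since only the order-of-magnitude version is essentially automatic.
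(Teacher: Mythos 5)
The paper does not prove this statement at all --- it is imported as a classical result with a citation to Rivlin's textbook, and the appendix uses it as a black box. Your outline is the standard Jackson-kernel convolution proof (pass to $x=\cos\theta$, convolve with the normalized fourth-power kernel $\gamma_m(\sin(m\theta/2)/\sin(\theta/2))^4$, bound the first moment by $c/m$, and apply $\omega(\lambda\delta)\leq(1+\lambda)\omega(\delta)$), which is exactly the argument in the cited source; it is correct in outline, and you have rightly identified that the only delicate point is pinning down the explicit constant $6$, which requires Rivlin's sharpened moment estimates rather than the order-of-magnitude versions.
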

\vspace{-3mm}

 Below we prove Theorem~\ref{thm:windowfunc} with $\eta$ rescaled to $\eta f_\text{max}$. If Jackson's theorem were to be used to construct the desired polynomial approximation directly then the degree would scale with $O(\eta^{-2})$. By introducing an amplifying polynomial we improve this to $O( \frac{1}{\eta}\ln\frac{1}{\eta}  )$.

\begin{theorem*} (Theorem~\ref{thm:windowfunc} restated.) For every $\eta > 0$ and any $\bar a, \bar b$ with $-1 < \bar a < \bar b < 1$ there there exists a polynomial $w(x)$ such that for all $f(x)$ bounded by $f_\text{max}$:
\begin{align}\left|\int_{-1}^{1} f(x) w(x) dx - \int_{\bar a}^{\bar b} f(x) dx \right| \leq \eta f_\text{max} \end{align}
The polynomial has degree $d \in O( \frac{1}{\eta}\ln\frac{1}{\eta}  ) $ and $w(x)/2$ satisfies the requirements of Lemma~\ref{lemma:svt}.
\end{theorem*}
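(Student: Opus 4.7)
The plan is to approximate the indicator $\mathbf{1}_{[\bar a,\bar b]}$ by a polynomial $w$ in such a way that the integral against the measure $f$ is controlled. The natural object to approximate is the $0/1$ step, but Jackson's theorem applied to a sharp trapezoid of transition width $\Delta$ gives uniform error $O(1/(n\Delta))$, which would force degree $O(1/\eta^2)$. To shave the bound down to $O(\frac{1}{\eta}\ln\frac{1}{\eta})$ I will follow the amplification trick from \cite{0902.3757}: perform Jackson's approximation at a modest accuracy, then compose with a low-degree ``polarizing'' polynomial that snaps values close to $0$ or $1$ down to $0$ or up to $1$, while leaving a controlled transition zone.

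First, I would fix a transition width $\Delta = \Theta(\eta)$ and define a continuous trapezoid $g(x)$ equal to $1$ on $[\bar a,\bar b]$, equal to $0$ outside $[\bar a-\Delta,\bar b+\Delta]$, and linearly interpolating in the two buffer strips. Jackson's theorem gives a polynomial $J_n$ of degree $n$ with $|J_n-g|_\infty \le 6\,\omega_g(1/n) \le 6/(n\Delta)$, since the trapezoid has Lipschitz constant $1/\Delta$. Choosing $n = \lceil 48/\Delta\rceil$, I obtain the crude bound $|J_n(x)-g(x)|\le \tfrac18$ uniformly on $[-1,1]$.

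Next, I would invoke a standard polarizing/amplifying polynomial $A(y)$ of degree $k$ built from a truncation of a sign-function polynomial (for instance the explicit Chebyshev construction used in \cite{0902.3757} or the polynomial approximations to $\mathrm{sgn}$ in \cite{1707.05391, 0604324}), with the properties $A(y)\in[0,1+o(1)]$ on a neighborhood of $[0,1]$ that contains the range of $J_n$, $|A(y)|\le \eps'$ for $y\in[-\tfrac14,\tfrac14]$, and $|A(y)-1|\le \eps'$ for $y\in[\tfrac34,\tfrac54]$. Such an $A$ exists with $k=O(\log(1/\eps'))$ independent of $\Delta$. Define the candidate $w(x) := A(J_n(x))$; its degree is $nk = O\!\left(\tfrac{1}{\Delta}\log\tfrac{1}{\eps'}\right)$ and, by construction, $|w(x)/2|\le 1$ so the Lemma~\ref{lemma:svt} scaling requirement is met.

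For the error analysis I would split $[-1,1]$ into the ``bulk'' $B$ where $g(x)\in\{0,1\}$ and the two ``transition'' strips $T$ of total length at most $4\Delta$. On $B$, the crude Jackson bound places $J_n(x)$ in the flat region of $A$, so $|w(x)-\mathbf{1}_{[\bar a,\bar b]}(x)|\le \eps'$, and on $T$ we only use the trivial bound $|w(x)-\mathbf{1}_{[\bar a,\bar b]}(x)|\le 2$. Writing
\begin{align}
\left|\int_{-1}^{1} w(x)f(\alpha x)dx - \int_{\bar a}^{\bar b} f(\alpha x)dx\right| \le \int_{B}\eps'|f(\alpha x)|\,dx + \int_{T}2|f(\alpha x)|\,dx,
\end{align}
and applying the boundedness hypothesis (\ref{eq:boundedby}) to each piece bounds the right side by $2\eps' f_\text{max} + 8\Delta f_\text{max}$. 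Setting $\Delta = \eta/16$ and $\eps' = \eta/4$ makes this at most $\eta f_\text{max}$, and the resulting degree is $O\!\left(\tfrac{1}{\Delta}\log\tfrac{1}{\eps'}\right) = O\!\left(\tfrac{1}{\eta}\ln\tfrac{1}{\eta}\right)$, matching the claim. The main obstacle I anticipate is the explicit construction and bound on the amplifier $A$ (controlling its range on the slightly enlarged interval where $J_n$ lives, and verifying that the composed polynomial stays in $[-2,2]$); the rest is a routine accounting of errors once $A$ is in hand.
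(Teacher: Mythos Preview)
Your proposal is correct and follows essentially the same approach as the paper: Jackson's theorem on a trapezoid of width $\Theta(\eta)$ at constant accuracy, followed by composition with a degree-$O(\log(1/\eta))$ amplifying polynomial, with the error split between bulk and transition zones. The only substantive difference is that the paper works with a $\pm 1$-valued trapezoid (rescaled by $4/5$ before amplifying) and gives the amplifier explicitly as the binomial tail $A_k(x)=\sum_{j\ge k/2}\binom{k}{j}\big(\tfrac{1+x}{2}\big)^j\big(\tfrac{1-x}{2}\big)^{k-j}$, whose polarizing properties follow directly from a Chernoff bound; this resolves the ``main obstacle'' you flagged without appealing to the sign-function literature.
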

\begin{proof} 

Let $\kappa := \eta/4$. We begin by applying Jackson's theorem to a function $g(x)$ sketched in FIG.~1~a). $g(x) = 1$ in the region $[\bar a, \bar b]$ and $g(x) = -1$ outside of $[\bar a - \kappa ,\bar b + \kappa]$ and interpolates linearly between the gaps. We have $\omega_g(\delta) = \delta/\kappa$, so if we choose $n := 24/\kappa$ we obtain:
\begin{align} |J(x) - g(x)| \leq 6\omega_g(1/n) = \frac{6}{\kappa n} = \frac{1}{4} \end{align}
$J(x)$ is sketched in FIG.~1~b), and is guaranteed to stay inside the shaded region. Next we define the amplifying polynomial $A_k(x)$:
\vspace{-2mm}
\begin{align} A_k(x) := \sum_{j \geq k/2} \binom{k}{j} \left(\frac{1+x}{2}\right)^j \left(\frac{1-x}{2}\right)^{k-j} \end{align}
\vspace{-3mm}

Let $X$ be a random variable distributed as the sum of $k$ i.i.d. Bernoulli random variables, each with expectation $\frac{1+x}{2}$, and observe that $A_k(x) = \text{Pr}[X \geq k/2]$. Then it follows from the Chernoff bound that $A_k(x)$ stays inside the shaded region of FIG.~1~c) where $\tau := e^{-k/6}$. Pick $k := \left\lceil 6\ln \frac{4}{\eta} \right\rceil$ so that $\tau \leq \eta/4$.

Finally, we use $A_k(x)$ to amplify the error of $J(x)$.
\begin{align}
w(x) := A_k\left( \frac{4}{5} J(x) \right)
\end{align}
This polynomial $w(x)$ is inside the shaded region of FIG.~1~d) and has degree:
\begin{align}d := n \cdot k \in O\left( \frac{1}{\eta}\ln\frac{1}{\eta}  \right)\end{align}
Now we bound the error, which is intuitive from FIG.~1~d). In the region inside $[\bar a, \bar b]$ and outside $[\bar a - \kappa , \bar b + \kappa]$ we have an error at most $\tau$ and inside the interpolation regions we have error at most $1$. The regions have length $2-2\kappa$ and $2\kappa$ respectively, so:
\begin{align}&\left|\int_{-1}^{1} f(x) w(x) dx - \int_{\bar a}^{\bar b} f(x) dx \right|\cdot \frac{1}{f_\text{max}} \nonumber\\
&\leq \tau(2-2\kappa) + 2\kappa \leq 2\tau + 2\kappa \leq \frac{\eta}{2} + \frac{\eta}{2} = \eta 
\end{align}
    Here we implicitly use (\ref{eq:boundedby}). Note that a more careful choice of the division of error between regions may improve $d$ by a constant factor.
\end{proof}

\onecolumngrid

\begin{figure}[b]
\vspace{-3mm}
\centering
\includegraphics[width=\textwidth]{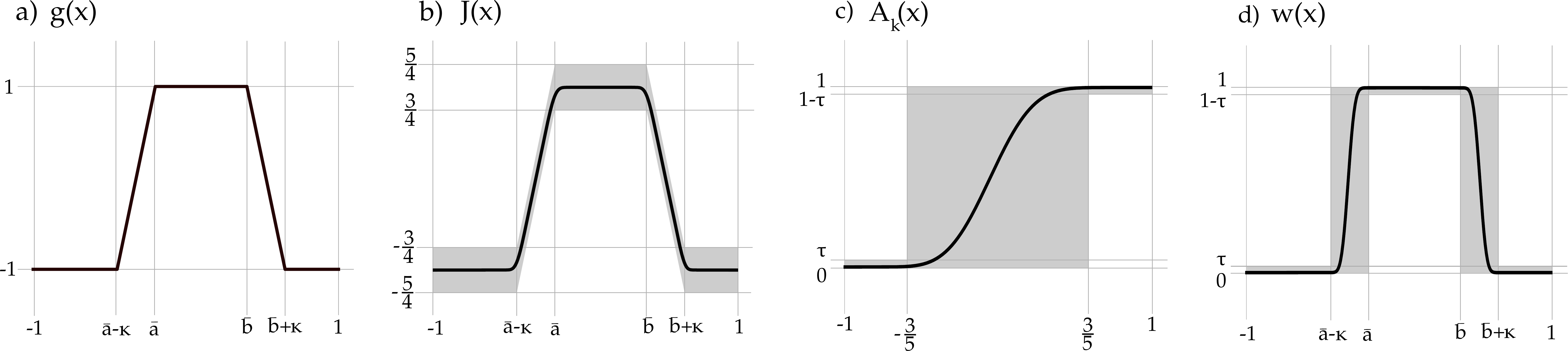}
\vspace{-5mm}
\caption{Functions involved in the proof of Theorem~\ref{thm:windowfunc}. a) The function $g(x)$ which is fed as input to Jackson's theorem. b) The polynomial $J(x)$ returned by Jackson's theorem that is within $1/4$ of $g(x)$. c) The amplifying polynomial $A_k(x)$ is guaranteed to amplify $3/5$ to $\eta := e^{-k/6}$. d) The window function polynomial $w(x)$ and is guaranteed to be inside the shaded region.}
\vspace{5mm}
\end{figure}

\end{document}